\let\euscr\mathscr \let\mathscr\relax
\newtheorem{proof}{Proof}
\newcommand*\bigstrut{%
	\vrule height\baselineskip depth.5\baselineskip width 0pt\relax}
\journal{Journal of \LaTeX\ Templates}
\begin{document}

\begin{frontmatter}

\title{Modeling, Analysis, and Control of Mechanical Systems under Power Constraints}

\author{Gorkem Secer}

\address{Control Systems Design Dept., ROKETSAN Missile Industries Inc., Ankara, Turkey.}

\begin{abstract}
Significant improvements have been achieved in
motion control systems with the availability of high speed
power switches and microcomputers on the market. Even
though motor drivers are able to provide high torque control
bandwidth under nominal conditions, they suffer from various
physical constraints which degrade both output amplitude and
bandwidth of torque control loop. In this context, peak power 
limit of a power source, as one of those constraints, has not 
been fully explored from the control perspective so far. 
A conventional and practical way of considering peak power limit
in control systems is to model it as a trivial torque saturation 
derived from the allowable torque at maximum speed satisfying the 
constraint. However, this model is overly conservative leading to 
poor closed loop performance when actuators operate below their 
maximum speed. In this paper, novel ways of incorporating peak 
power limits into both classical and optimal controllers are 
presented upon a theoretical analysis revealing its effects 
on stability and performance. 
\end{abstract}

\begin{keyword}
Motion Control\sep Dynamical Systems\sep Stability Analysis of Nonlinear Systems\sep Describing Function Analysis\sep Linear Matrix Inequalities\sep Constrained Optimal Control\sep Control Lyapunov Functions.
\end{keyword}

\end{frontmatter}

\section{Introduction}
\label{sec:intro}
Since operational lifetime of mobile systems is limited by capacity of their onboard power supply and by energy expenditure of actuators, running time can only be maximized by combined improvements in both supply side and actuation. In this context, for the power supply, while some systems use energy harvesting mechanisms \cite{melhuish.CONF2000,landis.PVSC2005,manning.SYSoSE2013}, there is an interesting work \cite{hutter.CONF2017} which adds autonomous recharging capabilities to a mobile robot for prolonged service. From the viewpoint of actuation, energy consumption can be minimized either by seeking optimal mechanical design \cite{avik_koditschek.RAL2016,hurst_hatton.RSS2016,hatton_hurst.IEEE2018} for a particular behavior or by designing energy efficient controls \cite{collins_ruina_tedrake_wisse.SCIENCE2005,secer_saranli.TRO2018}. Focus of this paper is on the latter, particularly on motion controllers respecting instantaneous power supply limits.

Motion control systems consist of actuators and drivers.
While the actuator generates torques/forces necessary to move the
system and/or payload attached to it, the driver is responsible for controlling the actuator. In this context, electrical
motor-based systems are becoming extremely popular thanks to improvements in magnet materials (e.g., NdFeB) and
inverter technology. Permanent-magnet synchronous motors
(PMSMs) are highly preferred within this trend compared to
other AC and DC machines due to their efficiency and
torque density \cite{sangbae.IROS2012}. Drivers control PMSMs by converting bus voltages/currents to motor voltages/currents with ratios adjustable via duty cycles of switching pulse width modulation signals. Developments in high speed digital devices and switching circuits have increased the performance of torque controllers on drivers such that relatively high bandwidth can be provided allowing motion control designers to ignore electrical dynamics and inefficiencies. Based on this abstraction, it is often sufficient to model the driver
and motor combination as a perfect torque source. However,
motion control systems suffer from various nonlinearities
due to physical limitations in driving circuit and power
supply. There are two main nonlinearities present in motion
controllers in this context: \\
I) Peak current rating of switches (e.g., MOSFETs) which can be modeled as saturation of motor torques $u \in \mathbb{R}^n$ in the form
\begin{equation}
{\rm sat} (u, u_{\max}) = [{\rm sat}(u_1) \quad {\rm sat}(u_2) \quad\cdots\quad {\rm sat}(u_n)]^T
\label{eq:sat}
\end{equation}
with ${\rm sat}(u_i) = {\rm sign}(u_i) \min\{\lvert u_i \rvert,u_{\max}\}$, $i=1, 2, \dots n$. Due to saturation, actual and desired controller output become different, thus leading controller states to be updated wrongly. This may result in large overshoots at plant output and sometimes even instability (see \cite{doyle.ACC1987,choi_lee.TIE2009}) when the controller has slow and/or unstable modes (i.e., poles and zeros close to and/or on the right hand side of imaginary axis, respectively). Numerous techniques and extensions to controllers, such as conditional integration \cite{ross.ISA1967,hanus.AUTO1987}, reference governors \cite{kolmanovsky.AUTO2002, kolmanovsky.ACC2014}, variable structure controllers \cite{hall.TIE2001} or anti windup with model recovery filter \cite{zaccarian.EJC2009}, have been proposed in the literature to handle saturation effects in terms of both stability and performance. Most of these techniques have provable stability results which are generally obtained by describing function analysis of saturation nonlinearity \cite{astrom.ACC1989}, linear matrix inequalities \cite{teel_zaccarian.TAC2003,morari.AUTO2001} or convex programming \cite{tarbouriech.CDC1997,boyd.CDC1998}. 

\noindent II) Power limit of the onboard power supply batteries and circuits \cite{forsyth.TIE2015}, which is our focus in this paper. Even
though this constraint takes place in electrical part of the
system, it can be mapped to motion control level by encoding the constraint into software as
\begin{equation*}
{\rm psat} (u, \dot{q}) = [{\rm psat}(u_1, \dot{q}_1, \bar{P}_1) \;\;\cdots\;\; {\rm psat}(u_n, \dot{q}_n, \bar{P}_n)]^T
\end{equation*}where {\vspace{-0.8cm}\small
\begin{equation*}{\rm psat} (u_i, \dot{q}_i, \bar{P}_i) = \begin{cases} u_i & \text{if $P_i \leq \bar{P}_i$} \\
\dfrac{-\dot{q}_i \pm \sqrt{\strut\dot{q}_i^2 + 4 \bar{P}_i R_i / (k_t)_i^2}}{2 R_i / (k_t)_i^2} & \text{else if $u_i \gtrless 0$},
\end{cases}
\end{equation*}}$\dot{q}_i$ is the generalized velocity of the $i^{th}$ joint, $\bar{P}_i$ is the power budget allocated to $i^{th}$ joint supplied by a common source having peak power $P_{\max}$ with $P_{\max} = \sum_{i=1}^{n} \bar{P}_i$, and the power input $P_i$ to the $i^{th}$ motor includes mechanical power output determined by torque constant $(k_t)_i$ and motor losses dissipated across the resistance $R_i$ as
\begin{equation}
P_i:=u_i \dot{q}_i + u_i^2 R_i/(k_t)_i^2.
\label{eq:motorpower_full}
\end{equation}
For high efficiency motors, electrical losses can be ignored which leads to a simpler expression for power limit saturation as
\begin{equation}
{\rm psat} (u_i, \dot{q}_i) = \begin{cases} u_i & \text{if $P_i \leq \bar{P}_i$} \\ \bar{P}_i/\dot{q}_i & \text{otherwise.}
\end{cases}
\label{eq:psat_simple}
\end{equation}

The power limit is a nonlinearity, which might lead to instability and deterioration in closed loop performance. In contrary to the extensive amount of work on torque saturation, we have not found any work considering the external power supply limit constraint explicitly to this date, despite the fact that it has become particularly important for actuators with recent advances in mobile systems. A relatively reasonable yet conservative approximation that might be used in practice is
\begin{equation}
{\rm psat}(u_i, \dot{q}_i, \bar{P}_i) \approx {\rm sat}(u_i, \dfrac{\bar{P}_i}{\bar{v}_i})
\label{eq:psat_approx}
\end{equation}
where $\bar{v}_i$ denotes the no-load velocity of a motor located at $i^{th}$ joint. However, this may result in superfluous actuator and power supply selections to meet performance specifications of the motion controller, eventually limiting the overall system performance especially in the applications such as legged robots or electric vehicles where total weight is crucial for energetic efficiency. To show that, exact model of the power supply limit nonlinearity is analyzed in this paper from various perspectives in comparison with the approximate model (\ref{eq:psat_approx}). In particular, we derive the describing function of the power limit nonlinearity and maximum closed loop frequency attainable in its presence to understand gain and phase effects on stability and steady-state performance, respectively. This analysis is then used to motivate design of stable controllers based on exact model of the nonlinearity
for Euler-Lagrange mechanical systems. These controllers require that the total power limit of the battery is allocated to individual power budget of each actuator unit (i.e., $\bar{P}_i$) for which a static solution strategy (i.e., assigning a fixed power limit prior to operation) is used conventionally. Note that, the domain of power saturation function can be reduced to
\begin{equation*}
{\rm psat} (u, \dot{q}) = [{\rm psat}(u_1, \dot{q}_1) \;\;\cdots\;\; {\rm psat}(u_n, \dot{q}_n)]^T.
\end{equation*}
when the static allocation is employed. On the other hand, performance can be improved by updating power budget dynamically with time-varying limits $\bar{P}_i(t, q_i,\dot{q}_i)$. For example, when some joints consume less power than their limits, channelling their leftover power to remaining joints might improve the performance. To maximize this advantage, a general control architecture combining dynamic power allocation with exact power limit model is developed for linear Euler-Lagrange mechanical systems in this paper. The proposed architecture is posed as an optimization problem to solve a finite horizon optimal control problem and dynamic power allocation simultaneously. Note that, unless otherwise specified, the static allocation scheme is used throughout the paper.

The organization of this paper is as follows : Section~\ref{sec:frequencydomain} presents frequency domain analysis of the power limit nonlinearity that provides motivation to design controllers accounting for this nonlinearity. After such classical controllers are proposed in Section~\ref{sec:controller}, they are extended to a new class of algorithms in Section~\ref{sec:mpc} such that dynamic power allocation and feedback control is unified under a general architecture. Section~\ref{sec:experiments} gives preliminary experimental results on a single degree-of-freedom (DOF) actuator prototype. Finally, conclusions and future work are presented in Sec.~\ref{sec:conclusion}.

\section{Frequency Response Characteristics} 
\label{sec:frequencydomain}
In this section, effects of power limit $\rm{psat}$ (\ref{eq:psat_simple}) (i.e., the simpler model without electrical losses) on closed-loop stability and performance are investigated in comparison to the approximate model (\ref{eq:psat_approx}). In particular, the sinusoidal input describing function of nonlinearities (i.e., a first harmonic approximation to the nonlinearity) are derived to study destabilizing characteristic of both phenomenons, whereas the maximum closed loop bandwidth is derived to determine a benchmark for linear controllers using different strategies for power supply limit. Findings will be used as foundational premises to motivate more in-depth exploration of using exact models of power supply limit for controller design in the subsequent section.

\subsection{Describing Function Analysis}
\label{sec:describingfunction}
Describing function of a nonlinearity is a frequency domain approximation known as quasi-linerization \cite{gelb.BOOK1968}. Even though it was introduced long ago, it is still widely used to predict oscillations of linear dynamical systems subjected to a nonlinearity \cite{li.IEEE2018,boiko.IJC2009,nien.TPOWELEC2016}. In this context, the nonlinearity is approximated by its fundamental harmonic frequency response under a sinusoidal input to identify
limit cycles and study their local behavior. This approximation often provides sufficient accuracy for assessment of periodic behaviors, modeled by frequency domain response, in terms of stability and gain/phase characteristics since linear systems attenuates higher order harmonics considerably due to their low-pass filter characteristics \cite{fridman.AUTO2019}. In particular, the main focus of this paper is on motion control of electromechanical systems whose dynamics have at least $2^{nd}$ order relative degree, hence filtering out higher-order harmonics to a large extent.

Consider a single DoF actuator whose transfer function from torque $u$ to velocity $\dot{q}$ is denoted by $G(s)$. Denoting the describing function of the power saturation by $N(A,\omega)$, transfer function and describing function can be expressed in phasor notation as 
\begin{equation*}
\begin{aligned}
G(s)&=X_G(\omega)~\phase{\phi_G(\omega)} \\
N(A,\omega)&=X_N(A,\omega)~\phase{\phi_N(A,\omega)}.
\end{aligned}
\end{equation*}
Cascade connection of the describing function and transfer function is defined by
\begin{equation*}
G(s) N(A, \omega) = X(A,\omega) \phase{\phi(A,\omega)}
\end{equation*}
with 
\begin{equation*}
\begin{aligned}
X(A,\omega) &= X_G(\omega) X_N(A,\omega) \\
\phi(A,\omega) &= \phi_G(\omega) +  \phi_N(A,\omega).
\end{aligned}
\end{equation*}

First, consider the case where power saturation is not active, i.e., $P\leq P_{\max}$ as defined in (\ref{eq:psat_simple}). Since $u~=~{\rm psat}(u, \dot{q})$, the describing function becomes $N(A,\omega) = 1$. For the other case with $P > P_{\max}$, we start the derivation of describing function by assuming a sinusoidal torque input $u = A \sin(\omega t)$ 
In this case, the velocity and required power take the form
\begin{equation*}
\begin{aligned}
\dot{q} &= A X \sin(\psi + \phi) \\
P &= A^2 X \sin(\psi + \phi)\sin \psi = \dfrac{A^2 X}{2} \left[\cos\phi - \cos(2\psi+\phi) \right]
\end{aligned}
\end{equation*}
with $\psi := \omega t$. Observe that the required power $P$ is twice the angular frequency $\omega$ of the excitation torque, meaning that it is sufficient to consider the scaled time quantity in the interval $\psi \in [0,\pi]$ for the following analysis. Since required power is a continuous function of $\psi$, inverse image of the open interval $\Psi =  \{\psi : P(\psi) > P_{\max}\}$ is open \cite{munkres.BOOK2000}, i.e., $P^{-1}(\Psi)~\cong~(\psi_l, \psi_u)~\subset ~[0,\pi]$ with
\begin{equation}
P(\psi_l) = P(\psi_u) = P_{\max}.
\label{eq:openinterval}
\end{equation}
Solutions to lower and upper bounds in (\ref{eq:openinterval}) are obtained as 
\begin{equation*}
\!\psi_{\binom{l}{u}} = \dfrac{\pm \arccos(\cos\phi - 2 P_{\max} / (A^2 X))-\phi}{2} \!\!\mod\!\pi\!.
\end{equation*}
Therefore, when $\psi \in (\psi_l, \psi_u)$ (i.e., active region of power supply limit), power saturation maps desired motor torques to
\begin{equation*}
\bar{u} := {\rm psat}(u, \dot{q}) = P_{\max}/\left(A X \sin (\psi + \phi)\right).
\end{equation*}


Now that input-output torque relations of power supply limit are obtained in both cases, Fourier coefficients of the power limit nonlinearity over a period of the torque input $u~=~A~\sin(\omega t)$ i.e., $ t \in [0,2\pi]$, can be computed as
\begin{equation}
\begin{aligned}
c_N:=X_N \cos\phi_N &= \dfrac{1}{\pi A}\int_{0}^{2\pi} {\rm psat}(u,\dot{q}) \sin\psi \;{\rm d}\psi \\
s_N:=X_N \sin\phi_N &= \dfrac{1}{\pi A}\int_{0}^{2\pi} {\rm psat}(u,\dot{q}) \cos\psi \;{\rm d}\psi.
\end{aligned}
\label{eq:fourier}
\end{equation}
Defining the following notation for limits of the integral,
\begin{equation*}
\int_{l_1\cdots l_n}^{u_1\cdots u_n} h(x) {\mathrm d}x = \sum_{i=1}^{n} \int_{l_i}^{u_i} h(x) {\mathrm d}x,
\end{equation*}
integrals in (\ref{eq:fourier}) can be partitioned into active and passive regions of power saturation, corresponding to $P \leq P_{\max}$ and $P > P_{\max}$, respectively, as
\begin{equation*}
\begin{aligned}
\!\!\!c_N \!&= \!\!\dfrac{2}{\pi A}\!\left[\int\displaylimits_{0,\psi_u}^{\psi_l,\pi}\!\!\!\!A \sin^2\psi \;{\rm d}\psi +\! \int\displaylimits_{\psi_l}^{\psi_u}\!\!\bar{u}\sin\psi \;{\rm d}\psi \!\right] \\
\!\!\!s_N \!&= \!\!\dfrac{1}{\pi A}\!\left[\int\displaylimits_{0,\psi_u,\pi+\psi_u}^{\psi_l,\pi+\psi_l,2\pi}\!\!\!\!\!\!\!\!\!\!\!\tfrac{A}{2} \sin2\psi \;{\rm d}\psi +\!\!\!\!\!\!\!\! \int\displaylimits_{\psi_l,\pi+\psi_l}^{\psi_u,\pi+\psi_u}\!\!\!\!\!\!\!\!\!\bar{u} \cos\psi \;{\rm d}\psi \!\right].
\end{aligned}	
\end{equation*}
After some calculus to evaluate integrals, the following nonlinear system of equations are obtained
\begin{equation*}
\begin{aligned}
c_N &= \dfrac{2 P_{\max}}{\pi A^2 X}\;Y_N + \dfrac{2 (\pi-\Delta\psi) + \sin 2\psi_u - \sin 2\psi_l}{2\pi}\\
s_N &= \dfrac{2 P_{\max}}{\pi A^2 X}\;Z_N + \dfrac{\cos 2\psi_u - \cos 2\psi_l}{2\pi}
\end{aligned}
\end{equation*}
with $L_{\psi}~:=~\ln \dfrac{\sin(\psi_u + \phi)}{\sin(\psi_l + \phi)}$, $Y_N := \Delta\psi \cos\phi -L_{\psi} \sin\phi$, $Z_N:=\Delta\psi \sin\phi + L_{\psi} \cos\phi$, and $\Delta \psi~:=~\psi_u - \psi_l$. Since these equations do not admit analytical closed form solutions, numerical methods can be employed to obtain solution to $X_N$ and $\phi_N$ at a given amplitude $A$ and frequency $\omega$, which define the describing function $N(A,\omega)~=~X_N(A,\omega)~\phase{\phi_N(A,\omega)}$.

\begin{exmp}
	Consider a position servo system characterized by a second-order closed-loop transfer function 
	\begin{equation*}
	T(s) = \dfrac{\omega_n^2}{s^2 + 2 \,\zeta \omega_n s + \omega_n^2}
	\end{equation*}
	with natural frequency $\omega_n = 50\pi rad/s$ and damping ratio $\zeta=0.8$. Mechanics of the servo system consists of a mass $m~=~1~kg\,m^2$ and viscous damping $d~=~0.05~Ns/rad$ corresponding to a plant with transfer function 
	\begin{equation}
	G(s)=\dfrac{1}{s m + d}.
	\label{eq:universalexamplemodel}
	\end{equation} The system is controlled by a proportional-derivative (PD) controller in the form $u = K_p (q_d-q)-K_d\dot{q}$ with desired position $q_d$ and PD gains $K_p~=~\omega_n^2$ and $K_d~=~2 \zeta \omega_n~-~d$, respectively. Assume there exists a power limit $P_{\max}~=~400 W$. We consider both exact and approximate models of the power limit to evaluate their effects on the control loop through describing functions computed for amplitudes in the range $A \in [1, 500] rad$. Note that the approximate model is established as a standard torque saturation $u_{\max}~=~100 Nm$ which follows from the assumption that speed is limited by $\dot{q}_{\max} = 4 rad/s$.
	
	To begin with, the open-loop transfer function of the PD controlled mechanical system including a describing function is evaluated on a Nyquist plot in Fig. \ref{fig:nyquist}. It is observed that magnitudes of both nonlinearity models are less than or equal to unity, and the exact model produces non-negative imaginary part whereas approximation results in purely real describing function. These findings lead to two main conclusions: 1~)~Nonlinearities have no direct effect on stability since their describing functions do neither intersect with or encircled by the open-loop curve. 2) Nonlinearities increase the phase margin commonly, hence the robustness of the closed loop system. However, it should be distinguished that exact nonlinearity model provides more phase margin (even more with increase in amplitude) since it brings phase lead into the system because of its positive imaginary part as opposed to saturation. 
	
\end{exmp}

\begin{figure}[!h]
	\centering
	\includegraphics[width=0.8\columnwidth]{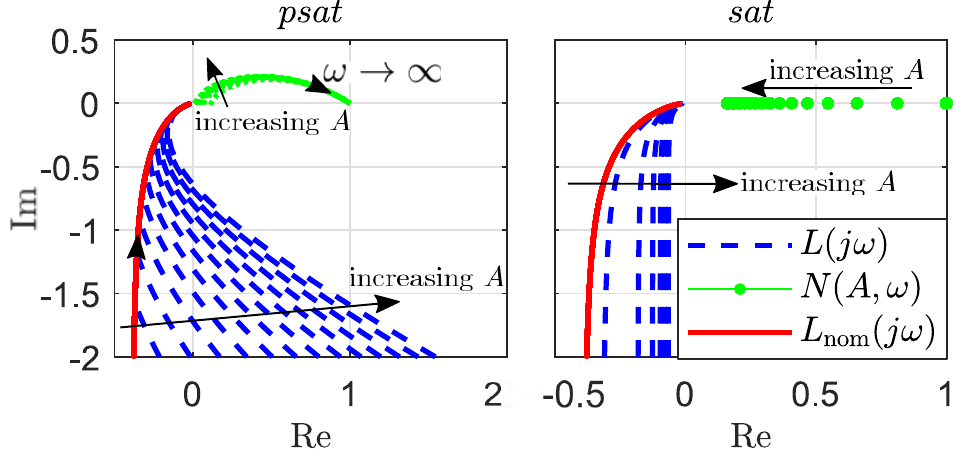}
	\caption{Nyquist diagram of nonlinearity describing function (dotted green), open-loop transfer function in the absence of power constraint (solid red), and open-loop transfer function when coupled with nonlinearity (dashed blue).}
	\label{fig:nyquist}
\end{figure}

\subsection{Maximum Closed Loop Bandwidth}
\label{sec:bandwidth}
In order to further improve on results of the last subsection summarized in Figure~\ref{fig:nyquist},
the closed loop bandwidth is evaluated more rigorously in this subsection by considering other nonlinearities common in motion control systems. While doing so, the exact model of power limit is quantitatively compared to its approximate model in terms of the achievable maximum bandwidth in the linear region of dynamics. Therefore, describing function of nonlinearity models are not needed. 

Consider a position controller with a sinusoidal reference $q_d = Y \cos (\omega t - \theta)$ applied to a single DoF linear system 
\begin{equation*}
m \ddot{q}  = u - k q - d \dot{q} - \tau_c \, {\rm sign}(\dot{q})
\end{equation*}
with inertia/mass $m$, viscous friction coefficient $d$, and Coulomb friction $\tau_c$. We are interested in maximum attainable bandwidth of the closed-loop system in the linear region of dynamics under different physical constraints. First, consider a power supply limit $P_{\max}$ (i.e., $P\leq P_{\max}$) and no-load speed limit $ -\dot{q}_{\max} \leq \dot{q} \leq \dot{q}_{\max}$. For the remaining cases, standard torque saturation $-u_{\max} \leq u \leq u_{max}$ with the conservative approximation to torque limit, i.e., $u_{\max} = P_{\max}/\dot{q}_{\max}$ given in (\ref{eq:psat_approx}), is modeled in addition to the no-load speed limit.

At cut-off frequency $\omega_c$, tracking the reference signal as
$q = \bar{Y} \cos(\omega_c t)$
with amplitude $\bar{Y} := Y /\sqrt{2}$ and phase delay $\theta$
requires velocity, torque and power
\vspace{-0.8cm}
{\small
\begin{equation*}
\begin{aligned}
\dot{q} &= -\bar{Y} \omega_c \sin(\omega_c t) \\
u &= \bar{Y} \left[(k - m \omega_c^2) \cos(\omega_c t) - d\, \omega_c \sin(\omega_c t) \right] - \tau_c {\rm sign}(\sin(\omega_c t)) \\
P &= -\tfrac{\bar{Y}^2 \omega_c}{2} \left[(k - m\omega_c^2) \sin (2\omega_c t) + d\,\omega_c (\cos(2\omega_c t) - 1) \right]\\ & \quad\, - \bar{Y} \omega_c \tau_c\lvert\sin(\omega_c t)\rvert.
\end{aligned}
\end{equation*}}In order to characterize the maximum closed loop bandwidth under different constraints, a common numerical optimization problem is formulated as follows : 
\begin{equation*}
\begin{aligned}
& \underset{\omega_c}{\text{maximize}}
& & \omega_c \\
& \text{subject to}
& & \omega_c > 0 &\cdots \quad(1)\\
& \quad 
& & \max_{0 \leq t \leq 2\pi/\omega_c} \lvert\dot{q}(t)\rvert \leq \dot{q}_{\max} &\cdots \quad(2)\\
& \quad
& & \max_{0\leq t \leq \pi/\omega_c} P(t) \leq P_{\max} &\cdots \quad(3)\\
& \quad
& & \max_{0\leq t \leq 2 \pi/\omega_c} \lvert u(t)\rvert \leq u_{\max} &\cdots \quad(4)
\end{aligned}
\end{equation*}
For the first case, which uses the exact model of power supply limit, inequality (4) is removed whereas inequality (3) is removed for the remaining case (i.e., approximate model).
\begin{exmp}
	Consider the same setup used in the Numerical Example of Sec.~\ref{sec:describingfunction} (i.e., rotational mechanical system, motor, and the driver). The theoretical permissible bandwidth of the system with exact and approximate models are evaluated for amplitudes in the range $Y \in (0, 1]$. This analysis is repeated for different power supply limits, specifically $P_{\max} \in \{200, 400, 600\} W$. Results are illustrated in Fig.~\ref{fig:maxclfreq} by a ratio of permissible bandwidths of the exact model to the approximate model, denoted by $\omega_c^{\rm psat}$ and $\omega_c^{\rm sat}$, respectively. Using exact model increases the maximum closed loop frequency substantially, nearly doubling the bandwidth at low amplitude region (e.g., $Y~<~10$ degrees for $P_{\max}~=~600 W$). Furthermore, it is seen from the figure that the ratio, interpreted as a function of amplitude $Y$, is not differentiable at two points. This can be explained by the change of active nonlinearity. In particular, the speed limit, which is passive for both models at small amplitudes, becomes active first for the exact model with the inactivation of power limit, and then for the approximate model with the inactivation of torque limit. For higher amplitudes, both systems are constrained by the speed limit alone, which makes the bandwidth ratio unity.
\end{exmp}

\begin{figure}[!h]
	\centering
	\includegraphics[width=0.9\columnwidth]{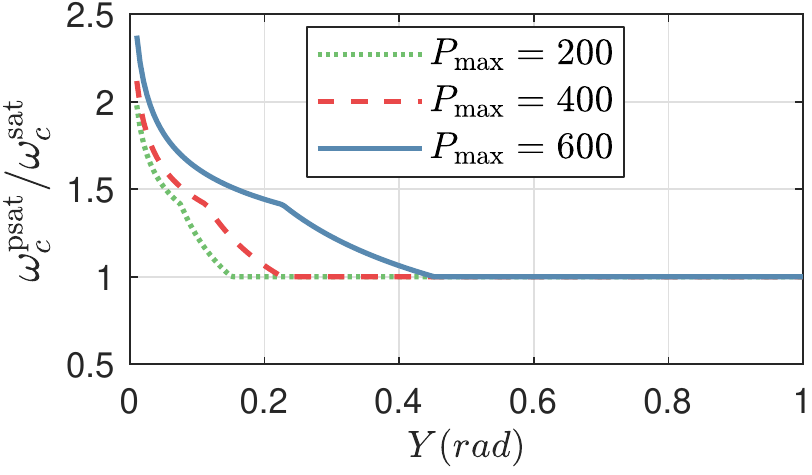}
	\caption{Maximum closed-loop frequency as a function of input amplitude for different springs.}
	\label{fig:maxclfreq}
\end{figure}

\section{Controller Design with Static Power Allocation}
\label{sec:controller}
Frequency domain analysis in Sec.~\ref{sec:frequencydomain} assesses performance and local stability of an actuator supplied by an energy source with a power supply limit in comparison to an actuator under a torque saturation corresponding to the conservative approximation (\ref{eq:psat_approx}) of the power limit. In particular, performance is measured by theoretical bandwidth limit of an actuator in the linear operation region, whereas describing function can serve as a tool to conduct local stability analysis around predicted limit cycles. In summary, the presented analysis suggests that using an exact model of the power supply limit outperforms the approximate model in terms of the open-loop phase margin, which is an indicator of closed-loop robustness, and closed-loop bandwidth. These results motivate further investigation of exact models for the power supply limit, especially during controller design. This section explicitly focuses on the ways of including power limit into existing motion control frameworks. In particular, we consider classical controllers for Euler-Lagrange mechanical systems. 

\subsection{Plant}
\label{sec:plant}
Linear and nonlinear mechanical systems are studied. Consider first, linear systems. Underactuated and fully-actuated linear mechanical systems can be commonly described by 
\begin{equation}
M \ddot{q} + D \dot{q} + K q = S \,{\rm psat}(u, \dot{q})
\label{eq:eom_linear}
\end{equation}
with mass matrix $M~=~M^T~>~0$, friction $D~=~D^T~\geq~0$, stiffness $K~=~K^T~\geq~0$, and actuator selection matrix $S$. Following, \cite{reyhanoglu.TAC1999}, the generalized coordinates can be decomposed into passive (unactuated) and actuated joints as $q = \left[q_u^T \; q_a^T\right]^T$ with unactuated joint vector $q_u \in \mathbb{R}^{n_u}$ and actuated joint vector $q_a \in \mathbb{R}^{n_a}$, respectively. Hence, the actuator selection matrix takes the form 
\begin{equation*}
S = \begin{bmatrix}
0_{n_u \times n_a} \\
I_{n_a \times n_a}
\end{bmatrix}.
\end{equation*}
On the other hand, fully-actuated rigid manipulators with nonlinear dynamics take the form of Euler Lagrange systems with Rayleigh dissipation \cite{ortega.BOOK1998}, which admit dynamics
\begin{equation}
M(q) \ddot{q} + C(q,\dot{q}) \dot{q} + D \dot{q} + G(q) = {\rm psat}(u, \dot{q})
\label{eq:eom_nonlinear}
\end{equation}
with mass matrix $M(q) = M^T(q) > 0$, Coriolis matrix $C(q,\dot{q})$, gravitational force $G(q)$ derived from potential field $\mathcal{U}(q)$ for all generalized coordinates and velocities in tangent bundle of configuration space $\mathcal{Q}$, i.e., $\forall (q^T, \dot{q}^T)^T \in T \mathcal{Q}$. Note that Coriolis matrix satisfies the skew-symmetric property \cite{mls.BOOK1994}
\begin{equation}
\dot{q}^T \left[\tfrac{1}{2} \dot{M}(q) - C(q,\dot{q})\right] \dot{q} = 0, \quad\quad \forall \; q, \dot{q} \in \mathbb{R}^n.
\label{eq:skewsymmetry}
\end{equation}
\subsection{Controls}
\label{sec:dynamiccontrol}
\subsubsection{Linear Systems}
For the plant with linear dynamics in (\ref{eq:eom_linear}), assume a linear dynamic controller of the form
\begin{equation}
\begin{aligned}
\dot{x}_c &= A_c x_c + B_p q + B_d \dot{q} \\
u &= C x_c + K_p q + K_d \dot{q}
\end{aligned}
\end{equation}
to stabilize the system in the absence of power limit saturation where $x_c \in \mathbb{R}^{n_c}$. Furthermore, the controller can be augmented by adding anti-windup compensation to avoid windup of controller states $x_c$ for which we consider both static and dynamic approaches. In particular, conditional integration (CI) and modern anti-windup (MAW) techniques \cite{zaccarian.EJC2009} with a general formulation are evaluated. These techniques are selected because of the following reasons : 1) CI is commonly used in practice since it is simple in the sense that it does not introduce any additional parameters for anti-windup augmentation \cite{visioli.BOOK2006}. 2) MAW provides superior performance compared to static techniques by guaranteeing stability and optimal performance with the additional flexibility encoded by the dynamics of anti-windup augmentation \cite{zaccarian_teel.BOOK2011}.

We propose to update the controller states according to the CI rule
\begin{equation}
\dot{x}_c = \mathcal{N}[\rho(C,\mathcal{H})] \, (A_c x_c + B_p q + B_d \dot{q})
\end{equation}
where $\mathcal{H}=\{1 \leq i \leq n_a : P_i > \bar{P}_i\}$ denotes the set of actuator channels demanding more power than the limit, $\rho(C,\mathcal{H})$ selects the rows of $C$ belonging to the set $\mathcal{H}$  
\begin{equation*}
\rho(C,\mathcal{H}) = [c_i], \forall i \in \mathcal{H}
\end{equation*}
with $\rho(C,\emptyset) = I$, and $\mathcal{N}[\rho(C,\mathcal{H})]$ denotes the projection operator onto null-space of the matrix $\rho(C,\mathcal{H})$ formulated as
\begin{equation}
\mathcal{N}[\rho(C,\mathcal{H})] = V_\mathcal{H} \Sigma_\mathcal{H} (V_\mathcal{H})^T 
\end{equation}
with singular value decomposition $\rho(C,\mathcal{H}) = U_\mathcal{H} \Sigma_\mathcal{H} V_\mathcal{H}^T$. The null-space projection freezes the integrator of state vector in the subspace associated with saturating actuator channels. Under the controller action, the closed loop system can be written as
\begin{equation}
\dot{x} = A(\mathcal{H}) x + B \, \sigma(x)
\label{eq:closedloop}
\end{equation}
where 
\begin{equation*}
\begin{aligned}
x &:= \!\! 
\begin{bmatrix}
q \\
\dot{q} 
\\
x_c
\end{bmatrix}, \quad
\sigma(x) := {\rm psat}(u, \dot{q}) - u, \quad u=\kappa x, \\
A(\mathcal{H}) &:= \!\!\begin{bmatrix}
0 & I &  0\\
M^{-1}(K_p + K) & M^{-1}(K_d + D) & M^{-1} C \\
\mathcal{N}[\rho(C,\mathcal{H})] B_p & \mathcal{N}[\rho(C,\mathcal{H})] B_d & 	
\mathcal{N}[\rho(C,\mathcal{H})] A_c
\end{bmatrix}\!, \\
B &:=\!\!
\begin{bmatrix}
0 \\
M^{-1} S \\
0
\end{bmatrix}, \text{ and }	
\kappa := \begin{bmatrix}
K_p & K_d &  C
\end{bmatrix}.	
\end{aligned}
\end{equation*}
As a second anti-windup alternative, MAW scheme can be used in which controller dynamics and controller output take the form 
\begin{equation}
\begin{aligned}
\dot{x}_c &= A_c x_c + B_p q + B_d \dot{q} + E_c \sigma(x) \\
u &= \kappa x + E \sigma(x)
\end{aligned}
\end{equation}
where $x_c$ now includes anti-windup augmentation states, and $E_c$ and $E$ denote anti-windup compensation gains contributing to the controller dynamics and output, respectively.
Thus, the closed loop system dynamics in (\ref{eq:closedloop}) is obtained with
\begin{equation*}
\begin{aligned}
A(\mathcal{H}) &:= \!\!\begin{bmatrix}
0 & I &  0\\
M^{-1}(K_p + K) & M^{-1}(K_d + D) & M^{-1} C \\
B_p & B_d & A_c
\end{bmatrix} \text{ and}\\
B &:=\!\!
\begin{bmatrix}
0 \\
M^{-1} S (I + E) \\
E_c
\end{bmatrix}.
\end{aligned}
\end{equation*}For the function $\sigma(x)$ which is common in both controller structures, a closed convex polytope can be defined as
\begin{equation*}
\mathcal{L} = \{x \in T \mathcal{Q} \times \mathbb{R}^{n_c} : \lvert\gamma_i \, [K_c]_i x\rvert \leq \bar{P}_i/\bar{v}_i , i = 1, \dots, n_a \}
\end{equation*}
with $[K_c]_i = \rho(K_c, \{i\})$ and a positive scalar $\gamma_i < 1$, such that $\forall x \in \mathcal{L}$, the following holds
\begin{equation*}
\begin{aligned}
0 \leq	\gamma_i [K_c]_i x \leq {\rm psat}([K_c]_i x, \dot{q}) \leq [K_c]_i x & \quad \text{if } [K_c]_i x \geq 0 \\
[K_c]_i x \leq {\rm psat}([K_c]_i x, \dot{q}) \leq \gamma_i [K_c]_i x \leq 0 & \quad \text{otherwise}.	
\end{aligned}
\end{equation*}
This translates to
\begin{equation}
\begin{aligned}
- (1-\gamma_i) [K_c]_i x \leq [\sigma(x)]_i \leq 0 & \quad \text{if } [K_c]_i x \geq 0 \\
0 \leq [\sigma(x)]_i \leq -(1-\gamma_i) [K_c]_i x & \quad \text{otherwise}.	
\end{aligned}
\label{eq:sigmavertices}
\end{equation}
\begin{thm}
	The system in (\ref{eq:closedloop}) is asymptotically stable in  $\mathcal{L}$ if 
	\begin{equation*}
	x^T \left[\left(A(\mathcal{H}) + B \,\Pi(\mathcal{H})\right)^T Q + Q \left(A(\mathcal{H}) + B \,\Pi(\mathcal{H})\right) \right] x < 0
	\end{equation*}
	for all $\mathcal{H} \in \euscr{P}(n_a)$ with a matrix $Q~=~Q^T~>~0$, $\euscr{P}(n_a)$ denoting the power set of integers from 1 to $n_a$, which, having cardinality of $2^{n_a}$, $\delta(\mathcal{H})$ a mapping from indices of saturating joints to $\mathbb{R}^{n \times n_c}$ defined as
	\begin{equation*}
	\Pi(\mathcal{H}) = \begin{bmatrix}
	- \delta(1,\mathcal{H}) (1-\gamma_1) [K_c]_1 \\ 
	- \delta(2,\mathcal{H}) (1-\gamma_2) [K_c]_2 \\
	\vdots \\
	- \delta(n_a,\mathcal{H}) (1-\gamma_{n_a}) [K_c]_{n_a}
	\end{bmatrix},
	\end{equation*}	
	and a binary indicator function
	\begin{equation*}
	\delta(i,\mathcal{H}) = \begin{cases}
	1 & \text{ if } i\in \mathcal{H} \\
	0 & \text{ otherwise.}
	\end{cases}
	\end{equation*}
\end{thm}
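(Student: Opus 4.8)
The plan is to prove stability with a common quadratic Lyapunov function $V(x)=x^{T}Qx$ and to reduce the decrease condition $\dot V<0$ on $\mathcal{L}$ to the finite family of vertex inequalities in the statement by recasting the closed loop as a polytopic linear differential inclusion.

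First I would make the inclusion precise. Fix $x\in\mathcal{L}$ and let $\mathcal{H}(x)=\{i:P_{i}>\bar P_{i}\}$ be the set of active channels, so the realized dynamics are $\dot x=A(\mathcal{H}(x))x+B\sigma(x)$. The componentwise bounds~(\ref{eq:sigmavertices}), available precisely because $x\in\mathcal{L}$, say that in either sign case $[\sigma(x)]_{i}$ lies between $0$ and $-(1-\gamma_{i})[K_{c}]_{i}x$; hence $[\sigma(x)]_{i}=-\theta_{i}(x)(1-\gamma_{i})[K_{c}]_{i}x$ for some scalar $\theta_{i}(x)\in[0,1]$, with $\theta_{i}(x)=0$ whenever $i\notin\mathcal{H}(x)$ since there $\sigma$ vanishes identically. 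Writing $\Pi_{\theta}$ for the matrix whose $i$-th row is $-\theta_{i}(1-\gamma_{i})[K_{c}]_{i}$, this gives $\sigma(x)=\Pi_{\theta(x)}x$, and as $\theta$ ranges over the box $\{\theta:\theta_{i}\in[0,1]\text{ for }i\in\mathcal{H}(x),\ \theta_{i}=0\text{ otherwise}\}$ the linear image $\Pi_{\theta}$ is a polytope whose vertices are exactly the matrices $\Pi(\mathcal{H}')$, $\mathcal{H}'\subseteq\mathcal{H}(x)$, of the statement — the indicator $\delta(\cdot,\cdot)$ merely records which rows are switched fully on, and $\Pi(\mathcal{H}(x))$ is the vertex $\theta\equiv 1$ on $\mathcal{H}(x)$. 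Therefore $A(\mathcal{H}(x))+B\Pi_{\theta(x)}$ is a convex combination of matrices of the form $A(\mathcal{H})+B\Pi(\mathcal{H})$, $\mathcal{H}\in\euscr{P}(n_{a})$ (together with the lower-order vertices $A(\mathcal{H})+B\Pi(\mathcal{H}')$, $\mathcal{H}'\subsetneq\mathcal{H}$).

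Next I would differentiate $V$ along solutions in $\mathcal{L}$,
\[
\dot V(x)=x^{T}\!\left[(A(\mathcal{H}(x))+B\Pi_{\theta(x)})^{T}Q+Q(A(\mathcal{H}(x))+B\Pi_{\theta(x)})\right]\!x ,
\]
and use that $\Pi\mapsto x^{T}[(A+B\Pi)^{T}Q+Q(A+B\Pi)]x$ is affine, so its value at the above convex combination equals the corresponding convex combination of its values at the vertices. Since by hypothesis each vertex quadratic $x^{T}[(A(\mathcal{H})+B\Pi(\mathcal{H}))^{T}Q+Q(A(\mathcal{H})+B\Pi(\mathcal{H}))]x$ is strictly negative for $x\neq0$, the same holds at every vertex of the relevant sub-box, whence $\dot V(x)<0$ for all $x\in\mathcal{L}\setminus\{0\}$. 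Combined with positive definiteness of $V$ this yields asymptotic stability of the origin by the standard Lyapunov argument, with the largest sublevel set of $V$ contained in $\mathcal{L}$ serving as a positively invariant estimate of the region of attraction.

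The hard part is making the polytopic embedding airtight: one must check that the index $\mathcal{H}$ inside $A(\mathcal{H})$ is consistently the same active set that bounds the support of $\theta(x)$, so that the admissible vertices at each $x$ are among the listed $2^{n_{a}}$ matrices rather than a richer family, and that the ill-defined, open behaviour of the active set on the switching surfaces $P_{i}=\bar P_{i}$ collapses onto the closed vertex set by continuity of $\dot V$ in $\theta$; one should also note that the CI null-space projection enters only $A(\mathcal{H})$ and therefore does not disturb affineness in $\Pi$, so the CI and MAW cases go through uniformly. The remaining point, more a caveat than an obstacle, is the regional nature of the claim: since~(\ref{eq:sigmavertices}) holds only on $\mathcal{L}$ and $\mathcal{L}$ need not itself be invariant, the conclusion is properly stated on an invariant sublevel set of $V$ inside $\mathcal{L}$; the rest is routine vertex-LMI bookkeeping.
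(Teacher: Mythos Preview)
Your proposal is correct and follows essentially the same route as the paper's proof: a common quadratic Lyapunov function $V(x)=x^{T}Qx$, the observation that $\dot V$ is affine in $\sigma$ (and in $A$), and reduction of the decrease condition on $\mathcal{L}$ to the $2^{n_a}$ vertex inequalities of the hypothesis. Your treatment is simply more explicit --- the $\theta$-parametrization of $\sigma$, the convex-combination argument, and the caveats about index consistency in $A(\mathcal{H})$ versus $\Pi(\mathcal{H}')$ and about regional invariance are details the paper's terse proof leaves implicit.
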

\begin{proof}
	Consider a Lyapunov function $\mathcal{V}(x) = x^T Q x$. The set $\mathcal{L}$ is asymptotically stable if 
	\begin{equation}
	\!\dot{\mathcal{V}}(x)\!=\!x^T \left[A(\mathcal{H})^T Q + Q A(\mathcal{H})\right] x + 2 x^T Q B \sigma(x)\!<\!0,
	\label{eq:linearstability}
	\end{equation}
	for all $x \in \mathcal{L}$. In other words, it can be said that the set $\mathcal{L}$ is inside the region of attraction (ROA). Since $\dot{\mathcal{V}}(x)$ is linear in $\sigma(x)$ and $A$, it attains its maximum in $\mathcal{L}$ at one of the vertices defined in (\ref{eq:sigmavertices}), which implies 
	\begin{equation*}
	\dot{\mathcal{V}}(x) \leq \max_{\mathcal{H} \in \euscr{P}} x^T \left[A(\mathcal{H})^T Q + Q A(\mathcal{H})\right] x + 2 x^T Q B \,\Pi(\mathcal{H}) x.
	\end{equation*}
	Hence, (\ref{eq:linearstability}) is satisfied if, for all $\mathcal{H} \in \euscr{P}$, we have 
	\begin{equation*}
	x^T \left[\left(A(\mathcal{H}) + B \,\Pi(\mathcal{H})\right)^T Q + Q \left(A(\mathcal{H}) + B \,\Pi(\mathcal{H})\right) \right] x < 0.
	\end{equation*}	
\end{proof}
\vspace{-3mm}
The ROA can be estimated by fitting a prescribed shape to the interior of polytope $\mathcal{L}$ \cite{tarbouriech.CDC1997} and \cite{lin.AUTO2002}. In this paper, ellipsoid $\mathcal{E} (Q, \alpha) = \{x : x^T Q x \leq \alpha \} \subset \mathcal{L}$, identified by the Lyapunov function and $\alpha~>~0$, is chosen as the reference shape. Therefore, volume of ROA can be maximized by jointly tuning the scalar $\gamma$, which determines the affine inequalities of polytope $\mathcal{L}$, and $\alpha$, which determines the boundary of ellipsoid, in line with prior works \cite{boyd.CDC1998} and \cite{lin.AUTO2002}. Using Schur's complement, this can be translated to a constrained optimization problem defined in the language of LMIs as
\begin{equation}
\begin{aligned}
& \underset{\substack{A_c, B_p, B_d \\ C, K_p, K_d \\ W, [\gamma_i]}}{\text{max}}
& & \log\det W \\
& \text{s.t.}
&& W > 0, \\
& \quad
& & 0 < \gamma_i < 1 \; \forall i = 1 \dots n_a, \\
& \quad
& & W \left(A(\mathcal{H}) + B \,\Pi(\mathcal{H})\right)^T + \left(A(\mathcal{H}) + B \,\Pi(\mathcal{H})\right) W < 0 \\ 
& \quad && \forall \mathcal{H} \in \euscr{P}(\mathcal{H}), \\
& \quad
& & \gamma_i^2 h_i W h_i^T  \leq 1 \Leftrightarrow \begin{bmatrix}
1 & \gamma_i h_i W \\
\gamma_i W h_i^T & W
\end{bmatrix} \geq 0 \\
& \quad && \forall i = 1 \dots n_a \\
\end{aligned} 
\end{equation}
with $W := \left(\dfrac{Q}{\alpha}\right)^{-1}$ and $h_i := \dfrac{\bar{v}_i}{\bigstrut \bar{P}_i} [K_c]_i$. Unfortunately, this program is not convex for joint optimization of controller gains and remaining parameters, which are $W$ and $\gamma_i$s, to find the controller with maximal ROA since they are not simultaneously linear in the LMIs. Iterative methods can be used to decompose the problem into convex subproblems and solve each stage of nested optimization via convex methods \cite{lin.AUTO2002, lin.AUTO2004}. Nevertheless, in Sec.~\ref{sec:experiments}, a controller with a CI anti-windup augmentation based on the exact model of power supply limit is designed following the procedure developed so far in this section and evaluated experimentally in both time and frequency domains compared to the same controller using the approximate model (\ref{eq:psat_simple}) of the power limit. 

\subsubsection{Nonlinear Systems} 
\label{sec:pbc}
Consider now a fully-actuated nonlinear Euler-Lagrange system controlled by a passivity-based static nonlinear controller (also known as PD plus gravity compensation) in the form
\begin{equation}
u = G(q) - K_p q - K_d \dot{q}.
\label{eq:pbc}
\end{equation}
with $K_p=K_p^T>0$. A Lyapunov function candidate
\begin{equation}
\mathcal{V}(q,\dot{q}) = \tfrac{1}{2} \dot{q}^T M(q) \dot{q} + \tfrac{1}{2} q^T K_p q,
\label{eq:pbclyapunov}
\end{equation}
can be derived from the kinetic energy along with the "shaped" potential energy \cite{koditschek.TECHREP1989} \cite{tomei.TRO1991}. This Lyapunov function candidate is radially unbounded and positive definite. Its derivative along the trajectories of (\ref{eq:eom_nonlinear}) can be found as follows : First differentiate it with respect to time
\begin{equation}
\dot{\mathcal{V}}(q,\dot{q}) = \dot{q}^T M(q) \ddot{q} + \dfrac{1}{2} \dot{q}^T \dot{M}(q) \dot{q} + \dot{q}^T K_p q.
\end{equation}
Then, substituting system dynamics in (\ref{eq:eom_nonlinear}) gives
\begin{equation}
\begin{aligned}
\dot{\mathcal{V}}(q,\dot{q}) =& \dot{q}^T \left({\rm psat}(u,\dot{q}) - C(q,\dot{q}) \dot{q} - D \dot{q} - G(q)\right) \\ +& \dfrac{1}{2} \dot{q}^T \dot{M}(q) \dot{q} + \dot{q}^T K_p q
\end{aligned}
\end{equation}
which can be further simplified to
\begin{equation}
\begin{aligned}
\dot{\mathcal{V}}(q,\dot{q}) =& \dot{q}^T {\rm psat}(u,\dot{q}) + \dfrac{1}{2} \dot{q}^T \left(\dot{M}(q) - 2C(q,\dot{q})\right)\dot{q} \\ 
+& \dot{q}^T K_p q - \dot{q}^T G(q) - \dot{q}^T D \dot{q}.
\end{aligned}
\end{equation}
Using the skew-symmetric property in (\ref{eq:skewsymmetry}) and substituting the control action in (\ref{eq:pbc}) after adding/subtracting the term $\dot{q}^T K_d \dot{q}$, we obtain
\begin{equation}
\dot{\mathcal{V}}(q,\dot{q}) = \dot{q}^T {\rm psat}(u,\dot{q}) - \dot{q}^T u - \dot{q}^T (K_d + D) \dot{q}.
\end{equation}
Consider, first, $P_i \leq \bar{P}_{i}$ for all joints meaning that ${\rm psat}(u,\dot{q}) = u$. This yields
\begin{equation}
\dot{\mathcal{V}}(q,\dot{q}) = - \dot{q}^T (K_d + D) \dot{q} \leq 0
\label{eq:pbclyapunovrate}
\end{equation}
negative as long as $\dot{q}\neq0$ for all system trajectories. With ${\rm psat}(u,\dot{q}) = u$, the system dynamics reduce to 
\begin{equation}
M(q)\ddot{q} + C(q,\dot{q}) \dot{q} + K_p q + (K_d + D) \dot{q} = 0
\end{equation}
which renders the origin ($q =0$ and $\dot{q}$) unique equilibrium point in the state space. Then, application of LaSalle's Invariance Principle \cite{khalil.BOOK2002} ensures that the origin is globally asymptotically stable (GAS). For the remaining case with $P_i > \bar{P}_i$, the time-derivative of the candidate Lyapunov function becomes strictly negative for all $q$ and $\dot{q}$ with 
\begin{equation}
\dot{\mathcal{V}}(q,\dot{q}) = - \dot{q}^T (K_d + D) \dot{q} - \sum_{i \in F}P_i - \bar{P}_i < 0,
\end{equation}
thus concluding that the system under passivity-based control action subjected to power supply limits is GAS.

\begin{table}[!h]
	\centering
	\caption{Physical parameters of the two-link robot model.}
	\label{tab:pbcsim}
	  \setlength\extrarowheight{-3pt}
	\begin{tabular}{r||c|c}
		{\bf Parameter} & {\bf Symbol} & {\bf Value} \\ \hline
		Link 1 mass & $m_1$ & 16 kg \\
		Link 2 mass & $m_2$ & 12 kg \\
		Link 1 inertia & $I_1$ & 18 $\text{kg}\,\text{m}^2$\\
		Link 2 inertia & $I_2$ & 7.5 $\text{kg}\,\text{m}^2$\\
		Joint damping matrix & $D$ & diag(10,10) $\text{N}\text{m}\,\text{s}/\text{rad}$ \\
		Link 1 length & $h_1$ & 1 m \\
		Link 2 length & $h_2$ & 1 m \\
		Link 1 angle & $q_1$ & $(-)$ rad \\
		Link 2 angle & $q_2$ & $(-)$ rad \\
		Link 1 torque & $u_1$ & $(-)$ $\text{N}\,\text{m}$ \\
		Link 2 torque & $u_2$ & $(-)$ $\text{N}\,\text{m}$ \\
		Gravity & $g$ & 9.8 $\text{m}/\text{s}^2$ \\
	\end{tabular}
\end{table}	

\begin{figure}[!b]
	\centering
	\includegraphics[width=0.6\columnwidth]{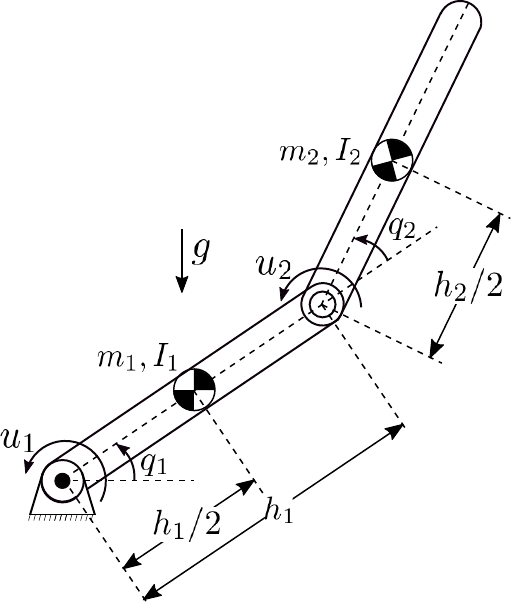}
	\caption{Fully-actuated two-link planar manipulator.}
	\label{fig:twolink}
\end{figure}

\begin{exmp}
	Consider a manipulator illustrated in Fig.~\ref{fig:twolink}. Assume that joint actuators are supplied by a single power source with peak power $P_{\max} = 2 kW$, corresponding to power limit of each joint $\bar{P}_1 = \bar{P}_2 = P_{\max} / 2$. Using parameter values given in Table~\ref{tab:pbcsim} and a passivity-based controller 
	\begin{equation*}
	\begin{aligned}
	K_p &= diag\left(\left[m_1 \omega_n^2, \, m_2 \omega_n^2\right]\right) \\
	K_d &= diag\left(\left[2 m_1 \zeta \omega_n, \, 2 m_2 \zeta \omega_n \right]\right),
	\end{aligned}
	\end{equation*}
	with $\omega_n = 2 \pi \sqrt{2}$ and $\zeta = 0.9$, the robot is simulated from initial conditions $q_1=-\pi/2$ and $q_2 = \pi$. As illustrated in Fig.~\ref{fig:pbcsim}, joint trajectories and joint torques verify that the origin is asymptotically stable for this system even though mechanical power of each joint actuator is upper bounded by the power supply limit of $1\,kW$.
	\begin{figure}[!h]
		\centering
		\includegraphics[width=0.7\columnwidth]{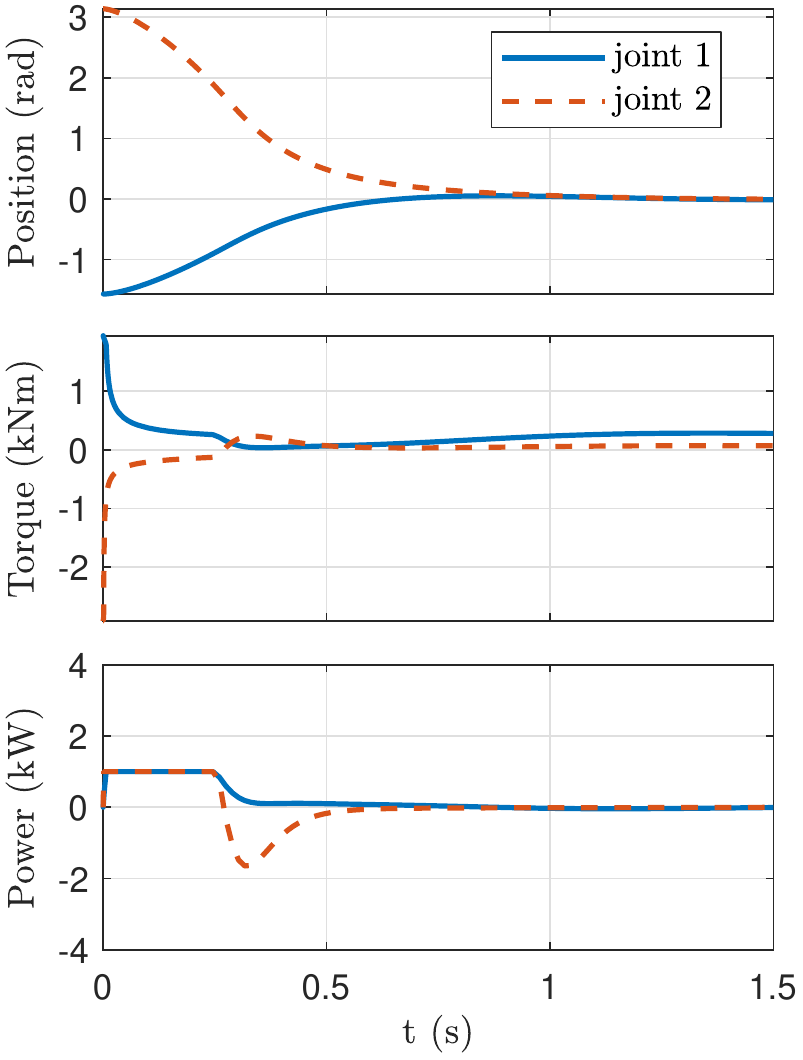}
		\caption{Trajectories (top), torques (middle), and power consumption (bottom) of a 2-DoF rotary robot with a passivity-based controller.}
		\label{fig:pbcsim}
	\end{figure}
\end{exmp}

\section{A Unified Algorithm for Dynamic Allocation and Control}
\label{sec:mpc}
Classical controllers with static power allocation enforce $P_j \leq \bar{P}_j$ for all joints. However, this may be unnecessary in some cases because channelling more power to particular joints (i.e. $P_j~>~\bar{P}_j$) without compromising $\sum_{i=1}^{n_a} P_i < P_{\max}$ is sometimes a feasible strategy which can improve tracking performance. For example, when a joint requires power less than its predefined limit for a desired trajectory, its residual power may be used temporarily by another joint which uses its available power completely but still fails to follow its own desired trajectory. This motivates the use of dynamic power allocation for improved performance. In the following subsections, two quadratic programming (QP) problems are presented to simultaneously solve control and dynamic power allocation problems for linear and nonlinear systems formulated as a finite-horizon optimal control problem and a control Lyapunov function-based QP problem, respectively. 

\subsection{Linear Systems}
\label{sec:linearmpc}
Without loss of generality, we consider linear mechanical systems in (\ref{eq:eom_linear}) extended with constant gravitational disturbance $G$. This results in state-space model
\begin{equation}
\dot{x}	= F_c x + H_c \upsilon + g_c
\label{eq:optimalcontrol_statespace}
\end{equation}
\begin{flalign*}
&\text{with }
 F_c\!=\!\begin{bmatrix}
	0 & I \\
	-M^{-1} K\! & \!-M^{-1} D
	\end{bmatrix}\!\!,\,
 H_c\!=\!\begin{bmatrix}
	0 \\
	M^{-1} S
	\end{bmatrix}\!\!, \,x\!=\!\begin{bmatrix}
	q \\
	\dot{q}
	\end{bmatrix}\!\!, \\
&\text{ and } g_c = \begin{bmatrix}
	0 \\
	-M^{-1} G
	\end{bmatrix}. \text{ Zero-order-hold discretization of}&&
\end{flalign*}
dynamics with sample time $\Delta t$ yields
\begin{equation}
x_{k+1} = F x_k + H  \upsilon_k + g
\label{eq:mpc_dynamics}
\end{equation}
\begin{equation*}
\begin{aligned}
&\text{with } H\!\!=\!\!\left(\!\int\displaylimits_{0}^{\Delta t}\!\!e^{F_c t} {\rm d}t\!\!\right)\!\!H_c, \; g\!\!=\!\!\left(\!\int\displaylimits_{0}^{\Delta t}\!\!e^{F_c t} {\rm d}t\!\!\right)\!\!g_c, \text{ and }F\!=\!e^{F_c \Delta t}\!.
\end{aligned}
\end{equation*}

Given an initial condition $x_0$ and a sequence of control action $\{\upsilon_i\}$, the solution to discrete dynamics is obtained as 
\begin{equation}
x_n = F^n x_0 + \sum_{i = 0}^{n - 1} F^{n-1-i} (H \upsilon_i + g).
\label{eq:mpc_dynamicssolution}
\end{equation}

We are interested in finite-horizon optimal control of dynamical system (\ref{eq:mpc_dynamics}) in the presence of linear constraints on state and control inputs and power supply limit. A common choice of quadratic cost function would be 
\begin{equation}
J~=~\underbrace{\Delta x_N^T \Lambda_f \Delta x_N}_{\substack {terminal\\cost}}~+~\sum_{n=0}^{N-1} \underbrace{\Delta x_n^T \Lambda \Delta x_n~+~\Delta \upsilon_n^T \Phi \Delta \upsilon_n}_{\substack{running\\cost}}
\end{equation}
with $\Delta x_n := x_n - x^\star_n$ state error between actual states and reference state target $x^\star_n$, $\Delta \upsilon_n := \upsilon_n - \upsilon^\star_n$ control input error between actual control signal and  control signal target $\upsilon^\star_n$ corresponding to reference state $x^\star_n$, horizon length $N$, and symmetric weight matrices $\Lambda_f~=~\Lambda_f^T \geq 0$, $\Lambda~=~\Lambda^T \geq  0$, and $\Phi = \Phi^T > 0$. Substituting (\ref{eq:mpc_dynamicssolution}) yields an equivalent form of cost function as a quadratic function of torques
\begin{equation}
J(\Upsilon) = c_z + z \Upsilon_N + \Upsilon_N^T \,Z \,\Upsilon_N
\label{eq:cost}
\end{equation}
with 
\begin{equation*}
\begin{aligned}
c_z &:= (\bar{x}_0 + \bar{g} - X_N^\star)^T \tilde{\Lambda} (\bar{x}_0 + \bar{g} - X_N^\star) + \Upsilon_N^{\star^T} \tilde{\Phi} \Upsilon_N^{\star}, \\
z &:= 2 (\bar{x}_0 +\bar{g} - X_N^\star)^T \tilde{\Lambda} \hat{H} - 2 \Upsilon_N^{\star^T} \tilde{\Phi}, \quad \;Z:=\hat{H}^T \tilde{\Lambda} \hat{H} + \tilde{\Phi},\!\\
\tilde{\Lambda} &:= {\rm diag}(\underbrace{\Lambda, \cdots, \Lambda}_{\substack{N-1\\copies}}, \Lambda_f), \quad
\tilde{\Phi} := {\rm diag}(\underbrace{\Phi, \cdots, \Phi}_{\substack{N\\copies}}), \\
\Upsilon_N &:= \!\! 
\begin{bmatrix}
\upsilon_0 \\ 
\vdots \\
\upsilon_{N-1}
\end{bmatrix}, \;
\bar{x}_0 :=\!\!  \begin{bmatrix}
F^1 \\ 
\vdots \\[3pt]
F^N
\end{bmatrix}\! x_0,\;
\bar{g} := \! \hat{F}\!  \begin{bmatrix}
g \\ 
\vdots \\[3pt]
g 
\end{bmatrix}, \; 
X_N^\star := \!\! 
\begin{bmatrix}
x_1^\star \\ 
\vdots \\
x_N^\star
\end{bmatrix}, \;
\\
\Upsilon_N^\star &:= \!\! 
\begin{bmatrix}
\upsilon_0^\star \\ 
\vdots \\
\upsilon_{N-1}^\star
\end{bmatrix}\!, \;
\hat{H}:=\hat{F}\! \underbrace{\begin{bmatrix}
	H & H & \cdots & H\\
	0 & H & \cdots & H \\
	\vdots & \vdots & \ddots & \vdots\\
	0 & 0 & \cdots & H
	\end{bmatrix}}_{\substack{N \times N \\copies}}, \text{and } \\
\hat{F} &:=\! \begin{bmatrix*}[c]
I & 0 & \cdots & 0\\ 
F & I & \ddots & \vdots \\
\vdots & & \ddots & 0\\[4pt]
F^{N-1} & F^{N-2} & \cdots & I
\end{bmatrix*}\!.
\end{aligned}
\end{equation*}
Similarly, one can see from the direct application of (\ref{eq:mpc_dynamicssolution}) that linear constraints on states in the form
\begin{equation*}
a_{\rm neq} \, x_k \leq b_{\rm neq} \quad \forall k = 1,\hdots,N
\end{equation*}
with $a_{\rm neq} \in \mathbb{R}^{n_c \times 2(n_a+n_u)}$, $b_{\rm neq} \in \mathbb{R}^{n_c}$, and $\leq$ denoting the component-wise relation translates to a linear constraint on torques as
\begin{equation*}
A_{\rm neq} \Upsilon_N \leq B_{\rm neq}
\end{equation*}
with 
\begin{equation*}
\begin{aligned}
A_{\rm neq} &:= D_{\rm neq} \hat{H}, \;
B_{\rm neq}:=\bar{b}_{\rm neq} - D_{\rm neq} (\bar{g} + \bar{x}_0), \\
\bar{b}_{\rm neq} &:= \left[
b_{\rm neq}^T \cdots b_{\rm neq}^T\right]^T, \text{ and } D_{\rm neq} := {\rm diag}(\underbrace{a_{\rm neq}, \cdots, a_{\rm neq}}_{\substack{N-1\;copies}}).
\end{aligned}
\end{equation*}
Therefore, the optimal control problem can be formulated as
\begin{equation}
\begin{aligned}
& \underset{\Upsilon_N}{\text{min}}
& & J(\Upsilon) = c_z + z \Upsilon_N + \Upsilon_N^T \,Z \,\Upsilon_N\\
& \text{s.t.}
& & A_{\rm neq} \Upsilon_N \leq B_{\rm neq} \\
& \quad
& & (\dot{q}_a)_n^T \upsilon_n + \upsilon_n^T\,{\rm diag}(\bar{R})\,\upsilon_n \leq P_{\max}, \; 0\leq n \leq N-1
\end{aligned}
\label{eq:mpc_optimproblem}
\end{equation}
where $(\dot{q}_a)_n := S^T \dot{q}_n$ denotes the velocities of actuated joints at time step $n$, and $\bar{R} : = \left[\dfrac{R_1}{k_t^2}, \dots, \dfrac{R_{n_a}}{k_{t}^2}\right]^T$.
\begin{prop}
	Power supply limit defines a quadratic constraint on torques $\Upsilon$.
\end{prop}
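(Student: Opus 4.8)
The plan is to substitute the closed-form solution of the discretized dynamics into the instantaneous power expression and observe that the composition is a degree-two polynomial in the stacked torque vector. First I would note that, collecting the states $x_1,\dots,x_N$ into a single stacked vector $X_N := [x_1^T \;\cdots\; x_N^T]^T$, the solution formula (\ref{eq:mpc_dynamicssolution}) yields the affine relation $X_N = \bar{x}_0 + \bar{g} + \hat{H}\,\Upsilon_N$, with $\bar{x}_0$, $\bar{g}$, and $\hat{H}$ exactly as defined below (\ref{eq:cost}). Hence each $x_n$, and in particular the joint-velocity vector $\dot{q}_n$ obtained from $x_n$ by a fixed $0/I$ selection matrix, is an affine function of $\Upsilon_N$ (for $n=0$ the velocity $\dot{q}_0$ is simply the known velocity part of the initial condition $x_0$). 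Likewise, the $n$-th control block is recovered as $\upsilon_n = E_n \Upsilon_N$ for a constant $0/I$ selection matrix $E_n$.

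Next I would rewrite the left-hand side of the power constraint appearing in (\ref{eq:mpc_optimproblem}). The electrical-loss term is $\upsilon_n^T {\rm diag}(\bar R)\,\upsilon_n = \Upsilon_N^T\,E_n^T {\rm diag}(\bar R)\,E_n\,\Upsilon_N$, which is manifestly quadratic in $\Upsilon_N$. The mechanical-power term is $(\dot q_a)_n^T \upsilon_n = \bigl(S^T \Gamma_n(\bar{x}_0+\bar{g}+\hat{H}\Upsilon_N)\bigr)^T E_n \Upsilon_N$, where $\Gamma_n$ extracts $\dot q_n$ from $X_N$ (or from $x_0$ when $n=0$); being the product of an affine function of $\Upsilon_N$ with the linear function $E_n\Upsilon_N$, it is again a polynomial of degree at most two in $\Upsilon_N$, with a constant, a linear, and a quadratic part. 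Adding the two contributions, the constraint for each $n$ takes the form $\Upsilon_N^T \mathcal{Q}_n \Upsilon_N + \ell_n \Upsilon_N + c_n \le P_{\max}$ with explicitly computable symmetric $\mathcal{Q}_n=\mathcal{Q}_n^T$, row vector $\ell_n$, and scalar $c_n$, i.e. a quadratic constraint on $\Upsilon$. Stacking over $n=0,\dots,N-1$ gives the announced finite family of such constraints.

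The argument involves no genuine obstacle beyond careful index bookkeeping with the stacking and selection matrices, since the composition of the affine state map with the quadratic power form is automatically quadratic. The only point worth flagging is that $\mathcal{Q}_n$ need not be positive semidefinite, because of the bilinear velocity--torque cross term $(\dot q_a)_n^T \upsilon_n$; the feasible set is therefore quadratic but not necessarily convex, a fact that is relevant to how the optimization in (\ref{eq:mpc_optimproblem}) is actually solved, though not to the statement of the proposition itself.
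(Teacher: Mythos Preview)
Your proposal is correct and follows essentially the same route as the paper: substitute the affine solution of the discretized dynamics into the power expression and observe that the result is a degree-two polynomial in the stacked torque vector. The only stylistic difference is that the paper works first in the truncated vector $\Upsilon_n$ and writes out the explicit block structure of the Hessian $E$ (with off-diagonal blocks $C_i=\tfrac{1}{2}H^T(F^{n-1-i})^T S_q S$ and diagonal block $\Omega=\mathrm{diag}(\bar R)$) before padding to $\Upsilon_N$, whereas you argue directly in $\Upsilon_N$ via selection matrices; the paper's explicit $E$ is what feeds into the subsequent non-convexity theorem, so if you continue you will eventually need that block form anyway.
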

\begin{proof}
	Observe that the solution to velocities takes the form
	\begin{equation*}
	\dot{q}_n  = S_q^T\left(F^n \; x_0 + \sum_{i=0}^{n-1} F^{n-1-i} \; (H \upsilon_i + g)\right),
	\end{equation*}
	where the operator $S_q^T = \left[0, I\right]^T$ selects the lower half of the states (i.e., velocities). Plugging this into (\ref{eq:mpc_optimproblem}) gives a quadratic inequality
	\begin{equation}
	\Upsilon_n^T 
	\underbrace{\left[\begin{array}{c c l : c}
		0 & \dots & 0 & C_0 \\
		\vdots & & \vdots & \vdots \\
		0 & \dots & 0 & C_{n-2} \\ \hdashline
		C_0^T & \dots & C_{n-2}^T & \Omega
		\end{array}	\right]}_{\displaystyle E}
	\Upsilon_n
	+ 
	\underbrace{\begin{bmatrix}
		0 \\
		\vdots \\
		0 \\ 
		\hdashline
		\beta
		\end{bmatrix}^T}_{\displaystyle \chi^T}\!\!\!\Upsilon_n \leq P_{\max}
	\label{eq:mpc_powerlimineq}
	\end{equation}
	with $C_i = \tfrac{1}{2}H^T (F^{n-1-i})^T S_q S$, $\Omega ={\rm diag}(\bar{R})$, and
	\begin{equation*}
	\beta = S^T S_q^T \left(F^n \; x_0 + \sum_{i=0}^{n-1} F^{n-1-i} \; g\right).
	\end{equation*}
	The inequality (\ref{eq:mpc_powerlimineq}) can be lifted to space of full control action sequence $\Upsilon_N$ by padding the matrix $E$ and the vector $\chi$ acting on $\Upsilon_n$ with zeros in the expression (\ref{eq:mpc_powerlimineq}). Hence, it can be concluded that the power limit imposes a quadratic constraint on $\Upsilon_N$.
\end{proof}

\begin{cor}
	The optimal control problem in (\ref{eq:mpc_optimproblem}) is a quadratically constrained quadratic program (QCQP) \cite{boyd.BOOK2004}.
\end{cor}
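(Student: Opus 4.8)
The plan is to simply collect the structural facts that have already been established and check them against the definition of a QCQP, so the proof is essentially an assembly step. First I would observe that the cost in (\ref{eq:cost}) is, by its very construction, a quadratic function of the stacked decision variable $\Upsilon_N$: it has the form $J(\Upsilon) = c_z + z\,\Upsilon_N + \Upsilon_N^T Z\,\Upsilon_N$ with $Z = \hat{H}^T \tilde{\Lambda}\hat{H} + \tilde{\Phi}$, and since $\tilde{\Lambda}\ge 0$ and $\tilde{\Phi}>0$ one in fact gets $Z>0$; thus the objective is a (strictly convex) quadratic form plus an affine term plus a constant, which is exactly the objective appearing in a QCQP.

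Next I would dispatch the two families of constraints in (\ref{eq:mpc_optimproblem}). The linear state constraints have already been reduced to $A_{\rm neq}\Upsilon_N \le B_{\rm neq}$, i.e.\ to finitely many affine inequalities in $\Upsilon_N$; each of these is the degenerate case of a quadratic inequality whose quadratic part vanishes, hence admissible in the QCQP template. The power-supply-limit constraints, one for each time step $0\le n\le N-1$, are precisely the content of the preceding Proposition: each was shown to be equivalent, after lifting (\ref{eq:mpc_powerlimineq}) to the full sequence $\Upsilon_N$ by zero-padding the matrix $E$ and the vector $\chi$, to a scalar inequality of the form $\Upsilon_N^T E'\,\Upsilon_N + \chi'^T\Upsilon_N \le P_{\max}$, a genuine quadratic inequality in $\Upsilon_N$. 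Putting these pieces together, problem (\ref{eq:mpc_optimproblem}) is the minimization of a quadratic function over the intersection of finitely many quadratic inequalities, which is exactly the canonical form of a QCQP as in \cite{boyd.BOOK2004}; that is all that needs to be said.

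The one point that requires a little care — and the place I would expect a careful reader to push back — is the distinction between a general QCQP and a convex one. The lifted power-constraint matrices inherit the bilinear velocity–torque coupling captured by the off-diagonal blocks $C_i$ in $E$, and $E$ has a vanishing leading diagonal block, so these matrices are generically indefinite; consequently only the objective is guaranteed convex, and the problem should be called a QCQP in the general (possibly nonconvex) sense rather than a convex QCQP. I would flag this explicitly rather than overclaim. If convexity of the feasible set were additionally wanted, a separate argument would be needed — for instance restricting to operating regimes in which each actuated-joint velocity keeps a fixed sign over the horizon so that the terms $u_i\dot q_i$ become sign-definite — but this lies outside what the corollary asserts.
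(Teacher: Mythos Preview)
Your proposal is correct and matches the paper's treatment: the corollary is stated there without proof, as an immediate consequence of the preceding Proposition together with the already-established quadratic form of $J(\Upsilon)$ and the affine form of the state constraints, which is exactly the assembly you carry out. Your additional remark flagging possible nonconvexity is also well placed, since the paper establishes precisely that in the theorem immediately following the corollary.
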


\begin{thm}
	QCQP in (\ref{eq:mpc_optimproblem}) is a non-convex problem.
\end{thm}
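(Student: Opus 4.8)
The plan is to show that the feasible region of (\ref{eq:mpc_optimproblem}) fails to be convex. Since the objective is a \emph{convex} quadratic in $\Upsilon_N$ --- indeed $Z=\hat{H}^T\tilde\Lambda\hat{H}+\tilde\Phi\succ 0$ because $\tilde\Phi\succ 0$ (as $\Phi\succ 0$) and $\hat{H}^T\tilde\Lambda\hat{H}\succeq 0$ --- and the state/input constraints $A_{\rm neq}\Upsilon_N\le B_{\rm neq}$ define a convex polyhedron, the only possible source of non-convexity is the family of power-supply constraints (\ref{eq:mpc_powerlimineq}). So the whole argument reduces to exhibiting one such constraint whose defining matrix $E$ is \emph{not} positive semidefinite.

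First I would isolate the power constraint at a non-initial horizon index $n\ge 1$ (such an index exists whenever $N\ge 2$; for $N=1$ the only power constraint is $(\dot q_a)_0^T\upsilon_0+\upsilon_0^T{\rm diag}(\bar R)\upsilon_0\le P_{\max}$ with ${\rm diag}(\bar R)\succeq 0$, which is convex, so the statement is understood in the regime $N\ge 2$). For such $n$, the lifted matrix $E$ from the proof of the Proposition has its leading diagonal block (the one multiplying $\upsilon_0$) equal to $0$, while the block coupling $\upsilon_0$ to the current torque is $C_0=\tfrac12 H^T(F^{n-1})^T S_q S$. I would then invoke the elementary fact that a symmetric positive-semidefinite matrix with a zero block on its diagonal must have the entire corresponding block row and column equal to zero; contrapositively, $C_0\neq 0$ forces $E\not\succeq 0$. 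Moreover $E$ cannot be negative semidefinite either, because its bottom-right block is $\Omega={\rm diag}(\bar R)$, which has a nonzero (indeed positive) diagonal for any motor with $R_i>0$; hence $E$ is genuinely indefinite.

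Next I would verify $C_0\neq 0$, which is where the structure of the discretized Euler--Lagrange dynamics enters: $C_0=\tfrac12\bigl(S_q^T F^{\,n-1}H\bigr)^T S$, and $S_q^T F^{\,n-1}H=0$ would mean the current joint velocity is independent of the first applied torque $\upsilon_0$ --- impossible, since with $F_c,H_c$ as in (\ref{eq:optimalcontrol_statespace}) one has, for small $\Delta t$, $S_q^T F^{\,n-1}H=\Delta t\,M^{-1}S+O(\Delta t^2)$, so $C_0=\tfrac{\Delta t}{2}\,S^TM^{-1}S+O(\Delta t^2)\neq 0$ for every sample time $\Delta t>0$ and $n\ge 1$ (all quantities depend analytically on $\Delta t$). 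With $E$ indefinite, the restriction of $g(\Upsilon_N):=\Upsilon_N^TE\Upsilon_N+\chi^T\Upsilon_N$ to a suitable two-dimensional subspace (spanned by a unit vector $p$ in the $\upsilon_0$-block with $p^TEp=0$ but $Ep\neq 0$, and a vector $r$ in the current-torque block) is a quadratic whose quadratic part is an indefinite $2\times 2$ form; its sublevel set $\{g\le P_{\max}\}$ therefore contains two points whose midpoint violates $g\le P_{\max}$. Keeping the remaining (linear) constraints slack near those witnesses --- which holds for generic problem data, e.g.\ a suitably chosen $(a_{\rm neq},b_{\rm neq})$ and $P_{\max}$ --- the midpoint is infeasible for the full problem, so the feasible set, and hence the QCQP (\ref{eq:mpc_optimproblem}), is non-convex.

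The step I expect to be the main obstacle is the last one: a non-convex constraint function does not \emph{by itself} make an optimization problem non-convex, since the offending constraint could be redundant. A fully rigorous treatment must therefore either carry along explicit feasible witnesses of non-convexity (and check that the other constraints do not cut the bad midpoint back into the feasible set) or invoke a genericity/transversality statement about the data; by contrast, establishing $C_0\neq 0$ and the indefiniteness of $E$ is routine once the block structure of $E$ in the Proposition's proof is in hand.
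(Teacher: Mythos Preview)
Your argument is correct and, in fact, cleaner than the paper's. Both proofs reduce the claim to showing that the matrix $E$ in (\ref{eq:mpc_powerlimineq}) fails to be positive semidefinite. The paper, however, proceeds by a case split: for $\bar R\neq 0$ it applies the Schur complement of $\Omega$ in $E$ to derive a contradiction, while for $\bar R=0$ it observes that $\mathrm{tr}\,E(0)=0$, so $E(0)\succeq 0$ would force all eigenvalues to vanish and hence (by symmetry) $E(0)=0$, which is dismissed as ``obviously not true''. Your single observation---that a positive-semidefinite matrix with a zero diagonal block must have the entire corresponding block row and column equal to zero---dispatches both cases at once and, unlike the paper, makes the underlying hypothesis $C_0\neq 0$ explicit and supplies a justification via the small-$\Delta t$ expansion of $S_q^T F^{n-1}H$. (Strictly speaking, that expansion only certifies $C_0\neq 0$ for sufficiently small $\Delta t$; the analyticity remark does not extend it to all $\Delta t>0$, though the physical interpretation---a torque impulse affects subsequent velocity---makes the conclusion clear.) Conversely, the paper's trace device for $\bar R=0$ is a pleasant self-contained trick that avoids singling out any particular off-diagonal block.

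The caveat you raise at the end---that an indefinite $E$ does not by itself render the \emph{feasible set} non-convex, since the offending constraint could be redundant---is a genuine subtlety that the paper does not address. The paper implicitly adopts the common convention that ``non-convex QCQP'' means ``QCQP with at least one non-convex quadratic constraint function''; under that reading the proof is complete once $E\not\succeq 0$. Your witness construction is the right way to close the gap if one insists on the stricter interpretation, and your identification of this as the main obstacle is accurate.
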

\begin{proof}
	
	First note that the cost function is convex since $Z~>~0$. 
	Further, linear constraints preserve convexity. Therefore, we restrict the proof to showing that power constraint is not convex. The inequality (\ref{eq:mpc_powerlimineq}) can be identified with the following quadratic form : 
	\begin{equation}
	\Upsilon_n^T\,E\,\Upsilon_n + \chi^T\,\Upsilon_n \leq P_{\max}
	\end{equation}
	This constraint is not convex if and only if $E$ is not positive semidefinite. Rest of the proof is done separately for the following two cases : i) $\bar{R} \neq 0$ ii) $\bar{R} =0$, corresponding to full model of power limit including electrical losses and reduced model without electrical losses, respectively.
	
	
	\begin{enumerate}[i)]
		\item Suppose that the theorem holds. This is equivalent to $E$ being positive semidefinite, from which Schur Complement defines the following conditions : 
		\begin{equation}
		\begin{aligned}
		E \geq 0 \Leftrightarrow \Omega \geq 0 \text{ and }
		\begin{bmatrix}
			C_0 \\
			\vdots\\
			C_{n-1}
			\end{bmatrix} \Omega
			\begin{bmatrix}
			C_0 \\
			\vdots\\
			C_{n-1}
			\end{bmatrix}^T \leq 0.
		\end{aligned}
		\label{eq:schurcomplement}
		\end{equation}
		It is obvious that the first inequality $\Omega \geq 0$ on the right hand side automatically refutes the inequality next to it which leads to a contradiction with the proposition $E \geq 0$. As a result, it is concluded that the power supply constraint is non-convex when $\bar{R} \neq 0$. 
		
		\item For a dynamical system (\ref{eq:mpc_dynamics}), the matrix $E$ can be treated as a continuous function of $\bar{R}$ since its entries depend continuously on $R_i$s. Given this fact, it follows from Theorem 3.1.1 in \cite{ortega.BOOK1990} that all eigenvalues of $E$ are continuous functions of $\bar{R}$, i.e., $\lambda_i(E(\bar{R})) \in \mathcal{C}^0$ for all $i \in {1, ..., n_a}$ where $\lambda_i(E)$ denotes the $i^{th}$ eigenvalue of $E$ and $\mathcal{C}^0$ space of continuous functions. Furthermore, it is known from the previous case that $E$ is not positive semidefinite which implies
		\begin{equation}
		\underset{i}{\min} \; \lambda_i(E(\bar{R})) < 0
		\label{eq:mineig}
		\end{equation}
		for $\bar{R} \neq 0$. Therefore, in order to show that the power supply constraint is non-convex still in this case, it is sufficient to verify (\ref{eq:mineig}) when $\bar{R} = 0$. Now, suppose that the contrary is true, i.e.,
		\begin{equation}
		\underset{i}{\min} \; \lambda_i(E(0)) \geq 0.
		\label{eq:mineig_nonneg}
		\end{equation}
		On the other hand, it follows from the fact 
		\begin{equation*}
		\rm{tr}(E(\bar{R})) = \displaystyle\sum_{i=1}^{n\times n_a} \lambda_i(E(\bar{R}))
		\end{equation*}
		that $\sum_{i}^{} \lambda_i (E(0)) = 0$. Equation (\ref{eq:mineig_nonneg}) and this relation can be satisfied simultaneously if and only if all eigenvalues are zero, which requires $E(0)$ to be a nilpotent matrix, which is obviously not true, leading to a contradiction with (\ref{eq:mineig_nonneg}).
	\end{enumerate}
\end{proof}

\begin{exmp}
	In order to validate the algorithm proposed in this section and demonstrate its efficiency, a fin actuation system of a missile illustrated in Fig.~\ref{fig:linearoptimalsim} is considered as an example system. The system consists of four identical actuators each connected to a fin used to control the missile by redirecting aerodynamic forces on it. Therefore, an actuator of the fin $i\in \mathcal{N}_a=\{1,2,3,4\}$ controls its position $q_i$ by applying torques $u_i$ which are subject to the following constraints : 
	\begin{enumerate}
		\item Actuators have fixed force limits (i.e., $-\bar{u} \leq u_i \leq \bar{u}$ associated with the peak current rating of their drive electronics.
		\item A realistic torque speed curve is enforced on actuators as a speed dependent force limit 
		\begin{equation*}
		- u_{{\rm stall}} \left(1 + \dfrac{\dot{q}_i}{\dot{q}_{{\rm max}}}\right) \leq u_i \leq u_{{\rm stall}} \left(1 - \dfrac{\dot{q}_i}{\dot{q}_{{\rm max}}}\right) \; 
		\end{equation*}
		where $u_{{\rm stall}}$ and $\dot{q}_{{\rm max}}$ denote stall force and no-load speed of actuators, respectively.
		\item The system is supplied by a common power source with maximum power output $P_{\max}$ which defines a constraint on torques depending on the power supply limit model used by the controllers.
	\end{enumerate} 
	\begin{figure}[!b]
		\centering
		\includegraphics[width=0.99\columnwidth]{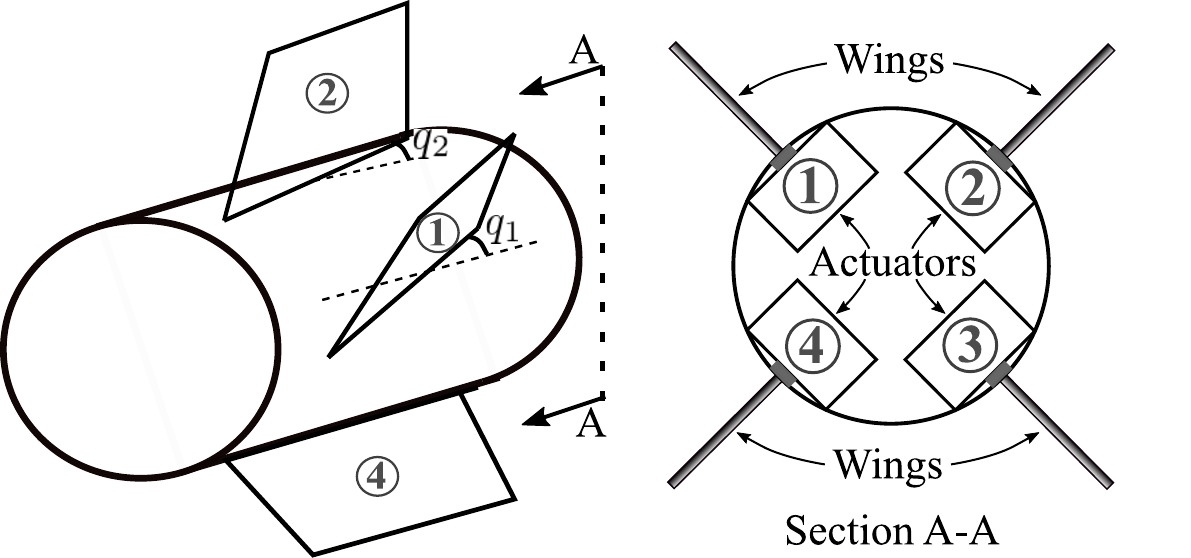}
		\caption{Missile fin actuation system and its section view.}
		\label{fig:linearoptimalsim}
	\end{figure}
	
	For each fin actuation channel, we consider the mechanical system used throughout Sec.~\ref{sec:frequencydomain} whose dynamics can be represented by the transfer function (\ref{eq:universalexamplemodel}). On the other hand, interaction of channels due to limits of the common power supply can be modeled by formulating the entire system as a multi DoF mechanical system by choosing $q~= ~[{q}_1 \quad {q}_2 \quad {q}_3 \quad {q}_4]^T$ as generalized coordinates, $x~=~[{q}^T \quad \dot{{q}}^T]^T$ as the system state, and $\upsilon = [u_1, u_2, u_3, u_4]^T$ as a torque input vector, which yields a linear system (\ref{eq:optimalcontrol_statespace}) in the state-space form with
	\begin{equation*}
	\begin{aligned}
	\!\!F_c\!=\!\!\!\left[\!\!\!\!
	\begin{array}{cc}
	0_{4 \times 4} & I_{4 \times 4} \\
	0_{4 \times 4} & (-d/m)\cdot I_{4 \times 4} 
	\end{array}\!\!\!\!
	\right],
	H_c\!=\!\!\!\left[
	\begin{array}{c}
	0_{4 \times 4}\\
	(1/m) \cdot I_{4 \times 4}
	\end{array}\!\!\!\!
	\right]\!, \text{ and } g_c\!=\!0.
	\end{aligned}
	\end{equation*}
	Three optimal control strategies are considered for this multi-DoF system. Controllers adopt the same  model for constraints 1 and 2, i.e., an inequality $A_{\rm neq} \Upsilon_N \leq b_{\rm neq}$, whereas they use different models for the constraint 3 (i.e., power supply limit). In particular, the following controllers are considered :	\begin{table}[!t]
		\centering
		\caption{Numerical values of actuator and controller parameters.}
		\label{tab:CASexample}
		  \setlength\extrarowheight{-3pt}
		
		\begin{tabular}{c||l}
			{\bf Parameter} & {\bf Value} \\ \hline
			$m$ & 1 $\text{kg} \,\text{m}^2$ \\ 
			$d$ & 0.05 $\text{N}\text{m}\,\text{s}/\text{rad}$ \\
			$\bar{u}$ & 180 $\text{N}\text{m}$ \\
			$u_{\rm stall}$ & 500 $\text{N}\text{m}$ \\
			$\dot{q}_{\max}$ & 4 $\text{rad}/\text{s}$ \\
			$\bar{R}$ & $(0.0056)\cdot\left[1, 1, 1, 1\right]^T$ $\text{Ohm}\,\text{A}^2 /(\text{N\text{m}})^2$ \\
			$P_{\max}$ & 750 W \\
			$\Delta t$ & 0.001 s \\
			$N$ & 300 \\
			$\Lambda$ & ${\rm diag}(2,2,2,2,\Delta t, \Delta t, \Delta t, \Delta t)\cdot(0.5/(\Delta t)^2)$ \\
			$\Phi$ & ${\rm diag}(1,1,1,1)$	\\
			$\Lambda_f$ & ${\rm diag}(0.1,0.1,0.1,0.1,\Delta t, \Delta t, \Delta t, \Delta t)\cdot(10/(\Delta t)^2)$	
		\end{tabular}
	\end{table}	
	\begin{enumerate}[C1:]
		\item The architecture (\ref{eq:mpc_optimproblem}) unifying a controller and a dynamic allocation strategy is evaluated first. To that end, the limit on aggregate power consumption (i.e., Constraint 2) is modeled by a quadratic constraint given in (\ref{eq:mpc_powerlimineq}), hence providing a solution to both control and dynamic power allocation problems concurrently.
		\item Modifying the first controller by reducing the aggregate power limit to power limit of individual actuators, a new controller can be obtained. This controller can be formulated as
		\begin{equation}
		\begin{aligned}
		& \underset{\Upsilon_N}{\text{min}}
		& & J(\Upsilon) = c_z + z \Upsilon_N + \Upsilon_N^T \,Z \,\Upsilon_N\\
		& \text{s.t.}
		& & A_{\rm neq} \Upsilon_N \leq B_{\rm neq} \\
		& \quad
		& & [\dot{q}_n]_i [\upsilon_n]_i + [\bar{R}]_i\,[\upsilon_n]_i^2 \leq \bar{P}_{i}, \; \forall n \wedge \forall i \in \mathcal{N}_a
		\end{aligned}
		\label{eq:mpc_optimproblem2}
		\end{equation}			 
		where $\mathcal{N}_a$ denotes the set of actuated coordinates, and $[\quad]_i$ operator selects the $i^{th}$ actuator from this set. This controller requires a predetermined static power allocation solution to the equation $\sum_{i=1}^{4}\bar{P}_i~=~P_{\max}$. For simulations, since actuators are identical, the trivial solution $\bar{P}_i=P_{\max}/{\rm dim}(\mathcal{N}_a) \,\forall i \in \mathcal{N}_a$ is used.
		\item The last controller can be derived from the second controller by  modeling an actuator's power limit with the conservative approximation (\ref{eq:psat_approx}). This enables formulation of the power limit in terms of a linear constraint, which can be captured by the optimal controller 
		\begin{equation}
		\begin{aligned}
		& \underset{\Upsilon_N}{\text{min}}
		& & J(\Upsilon) = c_z + z \Upsilon_N + \Upsilon_N^T \,Z \,\Upsilon_N\\
		& \text{s.t.}
		& & A_{\rm neq} \Upsilon_N \leq B_{\rm neq} \\
		& \quad
		& & [\upsilon_n]_i \leq \bar{P}_{i} / \dot{q}_{{\rm max},i}, \; \forall n \wedge \forall i \in \mathcal{N}_a.
		\end{aligned}
		\label{eq:mpc_optimproblem3}
		\end{equation}							
	\end{enumerate}	
	\begin{figure}[!t]
		\centering
		\includegraphics[width=0.8\columnwidth]{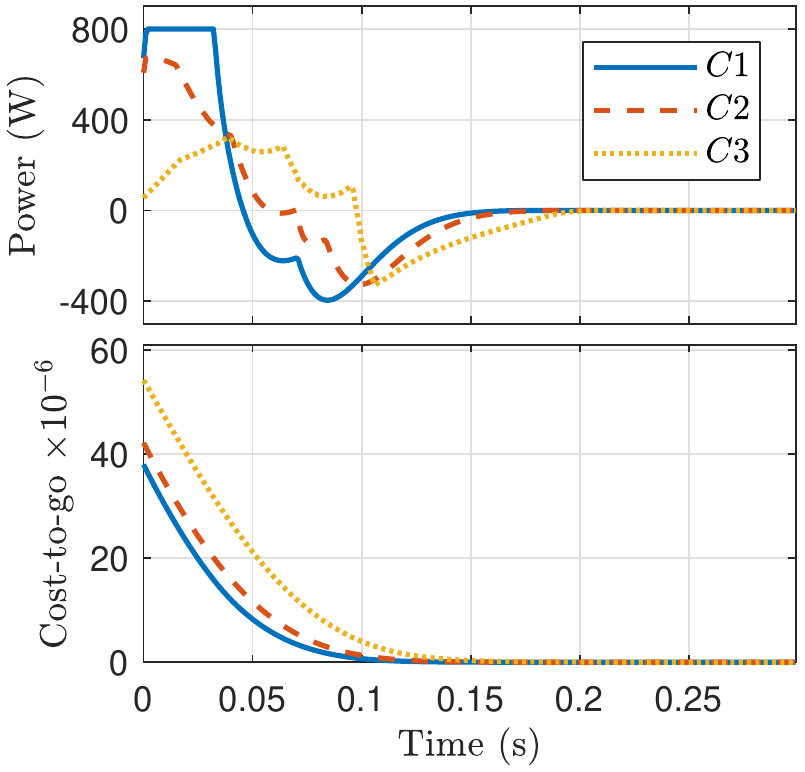}
		\caption{Cost-to-go function $J$ (bottom) scaled by a factor of $10^{-6}$ and total power consumption $\sum_{i=1}^{4}P_i$ of actuators (top), both as a function of time, for optimal controllers $C1$ (solid-blue), $C2$ (dashed-red), and $C3$ (dotted-yellow).}
		\label{fig:cas_example_PowerCost2go}
	\end{figure} Using numerical values given in Table~\ref{tab:CASexample} for the physical model and controllers, simulations are performed to evaluate controllers for an example task of driving the system state $x$ from the initial condition $x_0~=~\left[0.5, -0.16, 0.08, 0.28, 0, 0, 0, 0\right]^T$ to the target state $x^\star = 0$. Simulation results illustrated in Figures \ref{fig:cas_example_PowerCost2go} and \ref{fig:cas_example_PosTorque} indicate followings : 
	\begin{figure*}[!t]
		\includegraphics[width=\textwidth]{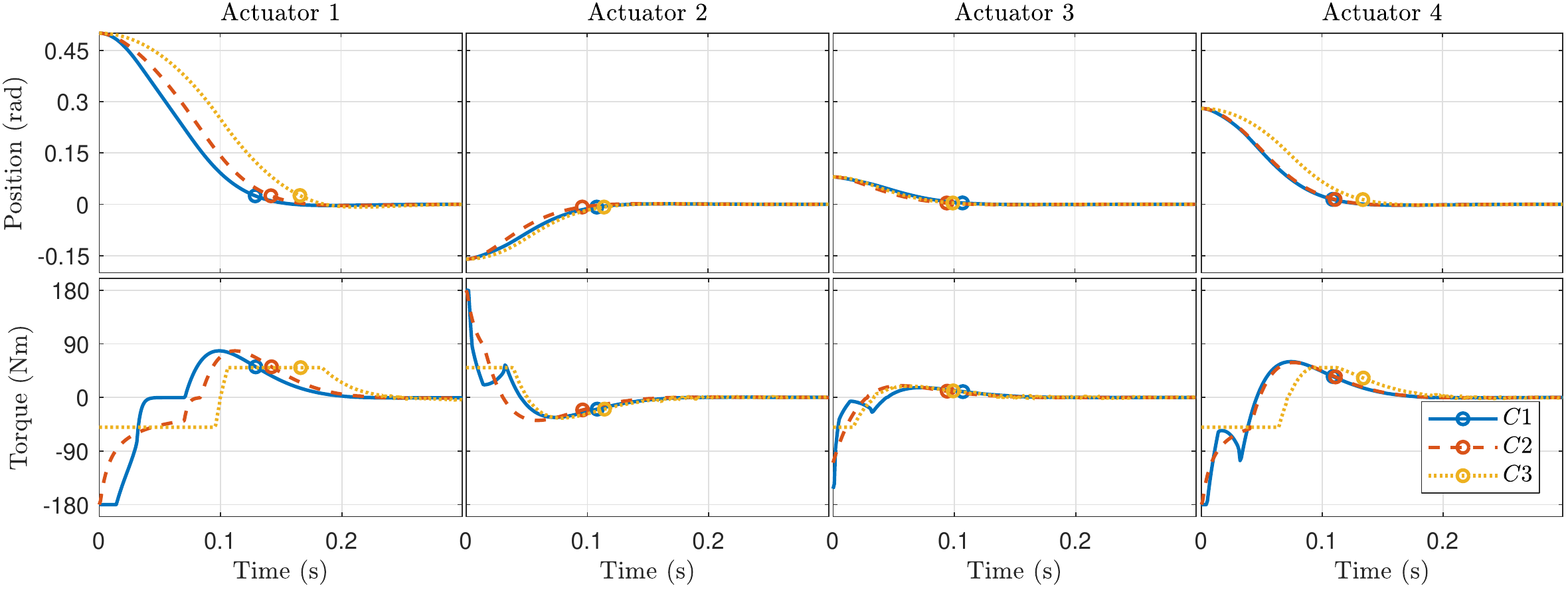}
		\caption{Position (top) and torque (bottom) trajectories of actuators whose settling times are marked with circles for optimal controllers $C1$ (solid-blue), $C2$ (dashed-red), and $C3$ (dotted-yellow). Settling time is defined as the time required for the response curve to reach and stay within 5\% of its initial value.}
		\label{fig:cas_example_PosTorque}
	\end{figure*}
	\begin{itemize}
		\item Controller $C3$, which adopts neither a dynamic allocation strategy nor an exact power model, performs the worst, manifesting itself as the slowest settling time to target location and as the highest cost function throughout all this time since it cannot deliver more than half of the available power as shown in Fig.~\ref{fig:cas_example_PowerCost2go}. This result is in favour of the main idea proposed in this paper, which is to exactly model power supply limit in controllers.
		\item Controller $C1$, i.e., the unified strategy proposed in this section, outperforms other strategies by providing the minimum cost-to-go function, as depicted in Fig.~\ref{fig:cas_example_PowerCost2go}. This is achieved by  allocating larger portion of the available power to particular joints which contribute to overall tracking performance more than other joints. This is in contrast with the static allocation which reserves equal amount of power for each joint. In particular, there are two mechanisms that can act as a medium for that strategy: 1) Residual power of joints with lower priority are channelled to joints with higher priority without violating the aggregate power limit so as to benefit from the power supply as much as possible. This is illustrated in Fig.~\ref{fig:cas_example_PowerCost2go} where controller $C1$ uses 100\% of the permissible power source in the first 50 ms of the simulation. 2) Local performance of joints with lower priority is sacrificed to improve overall performance, measured by the cost-to-go function, as illustrated in Fig.~\ref{fig:cas_example_PosTorque} where controller $C1$ tolerates a slight delay in the response of actuator 3 to improve response of actuator 1. 
	\end{itemize}
\end{exmp}

\subsection{Nonlinear Systems}
\label{sec:nonlinearmpc}
Even though dynamic allocation increases control efficiency for linear systems as shown in the previous section, it requires solving a non-convex QCQP optimization problem  which is not computationally efficient. Furthermore, if this finite-horizon optimization problem is generalized to nonlinear systems, this situation gets even worse with non-quadratic and nonlinear objective and constraint functions. In this subsection, an alternative approach based on a control lyapunov function (CLF) in conjunction with a pointwise-minimum-norm control \cite{freeman_kokotovic.ACC1995,freeman_kokotovic.SIAM1996,ames_grizzle_sreenath.TAC2014} is proposed to obtain a more practical unified algorithm.

Assuming that power limit nonlinearity is satisfied, dynamics in (\ref{eq:eom_nonlinear}) can be cast into state-space form
\begin{equation}
\dot{x} = f(x) + g(x) u 
\label{eq:jointstatespace}
\end{equation}
\begin{flalign*}
\begin{aligned}
\text{with } x &= \begin{bmatrix}
q \\
\dot{q}
\end{bmatrix}, \quad g(x)=\begin{bmatrix}
0 \\
M^{-1}(q)
\end{bmatrix}, \text{ and} \\
f(x) &= \begin{bmatrix}
\dot{q} \\
- M^{-1}(q) \left(C(q,\dot{q}) \dot{q} + D \dot{q} + G(q)\right)
\end{bmatrix}.
\end{aligned}&&&
\end{flalign*} 
Now, consider a task $y(x)$ of joint positions with the goal to construct a feedback control law embedding a dynamic power allocation strategy and being compatible with constraints (e.g., torque limits and a power supply limit) such that dynamics of positional error $\tilde{y} := y(x) - y^\star(t)$ between the output function $y(x)$ and a desired trajectory $y^\star(t)$ is rendered GAS with an equilibrium point $e:=\begin{bmatrix}\tilde{y}, & \dot{\tilde{y}}\end{bmatrix} = 0$. The proposed control design is based on prior knowledge of a CLF which is defined as a continuously-differentiable, proper (i.e., radially-unbounded), and positive-definite function $V : \mathbb{R}^{n}\to \mathbb{R}_+$ with $n={\rm dim}(e)$ such that 
\begin{equation}
\inf_{u}\left[\dot{V}(e) \right] < 0.
\label{eq:clfratedefn}
\end{equation}
In other words, the existence of a CLF for tracking $y^\star(t)$ is equivalent to the existence of a stabilizing controller which satisfies the inequality (\ref{eq:clfratedefn}). Such a controller and a CLF pair can be found by feedback linearization \cite{ames.ACC2016} systematically as follows : Considering $\tilde{y}(x,t)$ as the output function, input-output dynamics have a relative degree of two, hence taking the form
\begin{equation}
\ddot{\tilde{y}} = L_f^2 \tilde{y} + \left(L_g L_f \tilde{y}\right) u - \ddot{y}^\star
\label{eq:inputoutputdynamics}
\end{equation}
where $L_{\mathcal{X}} \mathcal{T}$ denotes the Lie derivative of a tensor field $\mathcal{T}$, which can be a scalar function, a vector field, or a one-form, along the flow of a vector field $\mathcal{X}$ \cite{lee.BOOK2012}. Observe that error dynamics (\ref{eq:inputoutputdynamics}) is a time-varying nonlinear system which can be represented in state-space form
\begin{equation}
\dfrac{d}{d\,t}\underbrace{\begin{bmatrix}
	\tilde{y} \\
	\dot{\tilde{y}}
	\end{bmatrix}}_{e} = 
\underbrace{\begin{bmatrix}
	\dot{\tilde{y}} \\
	L_f^2 \tilde{y} - \ddot{y}^\star(t)
	\end{bmatrix}}_{\tilde{f}(t, e)}
+
\underbrace{\begin{bmatrix}
	0_{n \times 1} \\
	L_g L_f \tilde{y}
	\end{bmatrix}}_{\tilde{g}(t,e)} u.
\label{eq:errorstatespace}
\end{equation}
As an illustrative example, consider the full joint-space control i.e., $y(x) = q$. 
In this case, input-output system (\ref{eq:errorstatespace}) becomes a trivial extension of the original state-space system (\ref{eq:jointstatespace}) with a time-dependent term as
\begin{equation*}
\begin{aligned}
\tilde{f}(t,e) &= f(x) - \begin{bmatrix}(\dot{q}^\star)^T & (\ddot{q}^\star)^T\end{bmatrix}^T \\
\tilde{g}(t,e) &= g(x),
\end{aligned}
\end{equation*}which shows that (\ref{eq:errorstatespace}) is sufficiently general to express a variety of tasks. Therefore, a feedback-linearizing control law
\begin{equation}
u = \left(L_g L_f y(x)\right)^{-1} \left(\tilde{u} + \ddot{y}^\star - L_f^2 y (x)\right),
\label{eq:uinputoutput}
\end{equation}
with an auxiliary controller $\tilde{u}$ transforms nonlinear dynamics of the system (\ref{eq:errorstatespace}) to those of a linear system
\begin{equation*}
\dfrac{d\,e}{d\,t} = \begin{bmatrix}
0_{n \times n} & I_{n \times n} \\
0_{n \times n} & 0_{n \times n}
\end{bmatrix} e
+ 
\begin{bmatrix}
0_{n \times n} \\
I_{n \times n}
\end{bmatrix} \tilde{u}.
\end{equation*}
which can be globally asymptotically stabilized by choosing auxiliary part of the controller as
\begin{equation}
\tilde{u} = - K_p\,\tilde{y} - K_d\,\dot{\tilde{y}}.
\label{eq:pdinputoutput}
\end{equation}
Resulting closed-loop error dynamics can be expressed in state-space form as
\begin{equation}
\dfrac{d\, e}{d\,t} = 
\underbrace{\begin{bmatrix}
	0_{n \times n} & I_{n \times n} \\
	-K_p & -K_d
	\end{bmatrix}}_{A_{cl}} \, e
\label{eq:errorcl}
\end{equation}
A candidate CLF for the system in (\ref{eq:errorstatespace}) is $V(e) = e^T \mathcal{P} e$ where $\mathcal{P}$ is a symmetric positive-definite matrix solution to the Lyapunov equation
\begin{equation}
A_{cl}^T \mathcal{P} + \mathcal{P} A_{cl} + \mathcal{W} = 0
\label{eq:riccatti}
\end{equation}
for some symmetric positive-definite matrix $\mathcal{W}$. Then, derivative of $V$ along trajectories of the closed-loop system (\ref{eq:errorcl}) satisfies
\begin{equation}
\dot{V} = -e^T \mathcal{W} e \leq -\varepsilon V  \leq 0,
\label{eq:clfrate}
\end{equation}
verifying the existence of a exponentially stabilizing CLF for the system (\ref{eq:errorstatespace}) 
such that
\begin{equation*}
V(e(t)) \leq V(e(0))\,e^{{\displaystyle-\varepsilon t}}
\end{equation*}
with $\varepsilon := \lambda_{\min}(\mathcal{W}) / \lambda_{\max}(\mathcal{P})$. 

The above procedure provides a development framework to build controllers for nonlinear mechanical systems despite its well-known drawbacks \cite{freeman_kokotovic.ACC1995,ames_grizzle_sreenath.TAC2014}. In particular, a feedback linearization controller may apply unnecessarily large control signals by blindly cancelling nonlinearities that might be beneficial to convergence of the system to the desired trajectory. Moreover, that increase in control effort may conflict with physical constraints such as torque limits. Motivated from these disadvantages, \cite{freeman_kokotovic.ACC1995,freeman_kokotovic.SIAM1996}, and \cite{primbs_doyle.AJC1999} proposed a new family of controllers based on pointwise solution to the QP problem
\begin{equation}
\begin{aligned}
& \underset{u}{\text{min}}
& & (u-u_0)^T \Phi (u-u_0)\\
& \text{s.t.}
& & L_{\tilde{f}} (e^T \mathcal{P} e) + L_{\tilde{g}} (e^T \mathcal{P} e) \, u \leq - e^T \mathcal{W} e
\end{aligned}
\label{eq:qpclf_basic}	
\end{equation}
which seeks a minimum norm solution to control signals around a baseline value $u_0$ weighted by a matrix $\Phi=\Phi^T > 0$ under the stabilization constraint expressed as an upper bound on time-derivative of the CLF (\ref{eq:clfrate}). The baseline control signal can be chosen in a variety of ways. For instance, $u_0 = 0$ can be used to minimize the instantaneous control actions or $u_0 = u^\star$ to enforce a particular control policy where $u^\star$ is computed by adding output of a baseline feedback law and feedforward torques for tracking the desired trajectory or even $u_0 = u_{k-1}$ to minimize the change in control actions. Note that the QP problem is solved at each sampling time, and controller output is updated accordingly. This strategy (hereinafter referred to as CLF-QP) has been successfully implemented for locomotion control of bipedal robots \cite{sreenath_ames_grizzle.IEEE2015} and for adaptive cruise control \cite{tabuada_grizzle_ames.ACC2015}. It can also serve as a framework to formulate a unified algorithm combining controls and dynamic power allocation by incorporating physical constraints into the optimization framework (\ref{eq:qpclf_basic}). As explained in Subsection \ref{sec:linearmpc}, there are two types of constraints : 1)~Affine constraints such as torque limits and speed/torque curve of each joint. These constraints can be enforced by a linear inequality $A_{\rm neq}(q,\dot{q})\,u \leq b_{\rm neq}(q,\dot{q})$ that needs to be satisfied at all times. 2) Instantaneous power supply limit which can be modeled as
\begin{equation}
u^T \Omega\,u + \dot{q}^T u \leq P_{\max}.
\label{eq:clfpowerlim}
\end{equation}
With these constraints, the QP formulation of a unified algorithm takes the form
\begin{equation}
\begin{aligned}
& \underset{u}{\text{min}}
& & (u-u_0)^T \Phi (u-u_0)\\
& \text{s.t.}
& & L_{\tilde{f}} (e^T \mathcal{P} e) + L_{\tilde{g}} (e^T \mathcal{P} e) \, u \leq - e^T \mathcal{W} e \\
& 
& & A_{\rm neq}(q,\dot{q})\,u \leq b_{\rm neq}(q,\dot{q}) \\
& 
& & u^T \Omega\,u + \dot{q}^T u \leq P_{\max}.
\end{aligned}
\label{eq:qpclf_final}	
\end{equation}
We now state the main result of this subsection.
\begin{thm}
	The unified algorithm based on CLF-QP formulation (\ref{eq:qpclf_final}) is a convex optimization problem.
\end{thm}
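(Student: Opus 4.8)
The plan is to recognize (\ref{eq:qpclf_final}) as a convex quadratically constrained quadratic program by appealing to the standard criterion \cite{boyd.BOOK2004}: a problem that minimizes a convex quadratic over a feasible set cut out by finitely many affine inequalities and finitely many sublevel sets of convex quadratics is convex. Accordingly, the proof reduces to four verifications: (a) the objective $u \mapsto (u-u_0)^T\Phi(u-u_0)$ is convex in the decision variable $u$; (b) the CLF decay inequality $L_{\tilde{f}}(e^T\mathcal{P} e) + L_{\tilde{g}}(e^T\mathcal{P} e)\,u \le -e^T\mathcal{W} e$ is affine in $u$; (c) the constraint $A_{\rm neq}(q,\dot{q})\,u \le b_{\rm neq}(q,\dot{q})$ is affine in $u$; and (d) the power-supply constraint $u^T\Omega\,u + \dot{q}^T u \le P_{\max}$ is a convex quadratic inequality in $u$.

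For (a), since $\Phi = \Phi^T > 0$ the Hessian of the objective is $2\Phi \succ 0$, so the objective is strictly convex; the offset $u_0$ is fixed data (a baseline control, a tracking target, or the previous input) and carries no dependence on the decision variable. For (b), the coefficients $L_{\tilde{f}}(e^T\mathcal{P} e) = 2 e^T\mathcal{P}\,\tilde{f}(t,e)$ and $L_{\tilde{g}}(e^T\mathcal{P} e) = 2 e^T\mathcal{P}\,\tilde{g}(t,e)$, together with the right-hand side $-e^T\mathcal{W} e$, are all evaluated at the current error state $e$ and time $t$; the factor $L_{\tilde{g}}(e^T\mathcal{P} e)$ involves $\tilde{g}(t,e)$, whose nontrivial block is $L_g L_f\tilde{y}$ and depends on $q$ only through $M^{-1}(q)$, never on $u$. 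Hence (b) is a relation of the form $a^T u \le b$ with constant $a$ and $b$, i.e. a half-space. The same remark applies verbatim to (c), whose data $A_{\rm neq}$ and $b_{\rm neq}$ are functions of the measured configuration and velocity alone, so (c) is an intersection of half-spaces.

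The crux is (d). Here $\Omega = {\rm diag}(\bar{R})$ with $\bar{R}$ whose entries are the strictly positive quantities $R_i/(k_t)_i^2 > 0$, so $\Omega = \Omega^T > 0$; the term $\dot{q}^T u$ is linear in $u$ because $\dot{q}$ is the presently measured velocity, and $P_{\max}$ is a constant. Thus the power constraint is the sublevel set of a strictly convex quadratic and is convex. It is instructive to contrast this with the finite-horizon setting of Subsection~\ref{sec:linearmpc}, where the analogous constraint was proved non-convex: there the joint velocities over the prediction horizon were themselves affine functions of the stacked input $\Upsilon_N$, which produced the indefinite coupling matrix $E$ of (\ref{eq:mpc_powerlimineq}); in the pointwise CLF-QP no such coupling arises because the $\dot{q}$ entering (\ref{eq:clfpowerlim}) is a state, not a function of the optimization variable. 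Combining (a)--(d), problem (\ref{eq:qpclf_final}) minimizes a strictly convex quadratic over the intersection of half-spaces and one convex-quadratic sublevel set, which is a convex set, hence it is a convex optimization problem. The one point demanding care, and therefore the main obstacle, is to confirm that nothing in the problem data conceals a dependence on $u$ that would spoil convexity: once one checks that the Lie derivatives, $A_{\rm neq}$, $b_{\rm neq}$, and $\dot{q}$ are all frozen at the current state during the pointwise minimization, and that $\Omega \succeq 0$, the conclusion is immediate.
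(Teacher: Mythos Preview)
Your proof is correct and follows essentially the same approach as the paper: both establish convexity by noting $\Phi \succ 0$ for the objective and the positive (semi)definiteness of $\Omega = {\rm diag}(\bar{R})$ for the power constraint, with the remaining constraints affine in $u$. The paper merely organizes this as a two-case split on $\bar{R} \neq 0$ (convex QCQP) versus $\bar{R} = 0$ (plain QP), which you subsume under the single observation $\Omega \succeq 0$; your explicit check in (b) that the Lie-derivative coefficients carry no hidden $u$-dependence is in fact more careful than the paper's brief treatment.
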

\begin{proof}
	Assume that position and velocity measurements are available. The proof is done separately for the following two cases :  i) $\bar{R} \neq 0$ ii) $\bar{R} = 0$, corresponding to full model of power limit including electrical losses and reduced model without electrical losses, respectively.
	\begin{enumerate}[i)]
		\item In this case, (\ref{eq:qpclf_final}) is a QCQP problem since the power limit constraint is quadratic due to nonzero $\Omega$. The convexity follows from the fact that $\Phi$ is a symmetric positive-definite matrix as defined in (\ref{eq:qpclf_basic}) and that $\Omega$ is a diagonal matrix with positive entries qualifying it as positive-definite symmetric matrix.
		\item In this case, (\ref{eq:qpclf_final}) reduces to a standard QP without a quadratic constraint since the power supply limit becomes an affine constraint with $\Omega=0$ due to $\bar{R}=0$. Convexity is automatically obtained as a result of this situation.
	\end{enumerate}
\end{proof}
Finally, observe that (\ref{eq:qpclf_final}) might be an infeasible QP problem since stabilization and physical constraints may conflict with each other. To ensure feasibility, we relax the QP problem by introducing a slack variable into CLF rate inequality such that intersection of feasible regions corresponding to constraints is never empty. The relaxed version of the unified algorithm can be formulated as 
\begin{equation}
\begin{aligned}
& \underset{u, p_s}{\text{min}}
& & (u-u_0)^T \Phi (u-u_0) + c_s p_s^2\\
& \text{s.t.}
& & L_{\tilde{f}} (e^T \mathcal{P} e) + L_{\tilde{g}} (e^T \mathcal{P} e) \, u \leq - e^T \mathcal{W} e + p_s\\
& 
& & A_{\rm neq}(q,\dot{q})\,u \leq b_{\rm neq}(q,\dot{q}) \\
& 
& & u^T \Omega\,u + \dot{q}^T u \leq P_{\max}.
\end{aligned}
\label{eq:qpclf_relaxed}	
\end{equation}
with a slack variable $p_s$ and a penalty coefficient $c_s$ associated with $p_s$. 
\begin{exmp}
	Consider the example in Sec.~\ref{sec:pbc} where a two-link manipulator illustrated in Fig.~\ref{fig:twolink} is used to evaluate a passivity-based controller. This time, the robot example is used to evaluate the proposed CLF-QP-based approach that combines the dynamic allocation and a task controller. To that end, we assume that the robot's physical parameters given in Table~\ref{tab:pbcsim} are the same. As for the controller, in contrary to that example, PD gains $K_p$ and $K_d$ of the feedback-linearizing controller (\ref{eq:uinputoutput}) are chosen as
	\begin{equation*}
	\begin{aligned}
	K_p &= \omega_n^2 \\
	K_d &= 2 \zeta \omega_n,
	\end{aligned}
	\end{equation*}
	with $\omega_n = 2 \pi (2.2)\cdot I_{2\times2}$ and $\zeta = (\sqrt{3}/2)\cdot I_{2\times2}$. This defines the closed-loop system (\ref{eq:errorcl}) whose 
	CLF $V=e^T \mathcal{P} e$ can be parametrized as
	\begin{equation*}
	\mathcal{P} = \begin{bmatrix}
	2\zeta \omega_n^2 & 2 \omega_n \sqrt{1-\zeta^2} \\
	2 \omega_n \sqrt{1-\zeta^2} & 2\zeta
	\end{bmatrix}.
	\end{equation*}
	Furthermore, electrical losses of the motor and two physical constraints are included into the system model as follows :
	\begin{enumerate}
		\item The motor resistance normalized by torque constant is chosen as $\bar{R} = \begin{bmatrix}0.0833,&0.222\end{bmatrix} \rm{mOhm}\,\rm{A}^2 /(\rm{N}\rm{m})^2$.
		\item Joint torque limits $-\bar{u}_i \leq u_i \leq \bar{u}_i$ are enforced as 
		\begin{equation*}
		\underbrace{\begin{bmatrix}
			1 & 0 \\
			-1 & 0 \\
			0 & 1 \\
			0 & -1
			\end{bmatrix}}_{\displaystyle A_{\rm neq}}
		\underbrace{\begin{bmatrix}
			u_1 \\
			u_2
			\end{bmatrix}}_{\displaystyle u}
		\leq
		\underbrace{\begin{bmatrix}
			\bar{u}_1 \\
			\bar{u}_1 \\
			\bar{u}_2 \\
			\bar{u}_2
			\end{bmatrix}}_{\displaystyle b_{\rm neq}}
		\end{equation*}
		with $\bar{u}=\begin{bmatrix}2000, & 1000\end{bmatrix} \text{Nm}$.
		\item Power supply limit is enforced as (\ref{eq:clfpowerlim}) with $P_{\max} = 1 \text{ kW}$.
	\end{enumerate}
	
	Based on this system model, three controllers are implemented in a simulation environment for validation and performance comparison.
	\begin{enumerate}[C1:]
		\item The first controller is CLF-QP-based unified algorithm  (\ref{eq:qpclf_relaxed}) proposed in this subsection with a zero baseline control signal 
		$u_0 = \begin{bmatrix}0, & 0\end{bmatrix}$, identity weighting matrix $\Phi~=~I_{2\times2}$ and the penalty coefficient $c_s~=~5\cdot10^4$, respectively.
		\item The second controller is again a CLF-QP-based unified algorithm but with static allocation as opposed to the controller $C1$ employing a dynamic allocation. This difference can be captured by modifying QP problem (\ref{eq:qpclf_relaxed}) as
		\begin{equation*}
		\begin{aligned}
		& \underset{u, p_s}{\text{min}}
		& & (u-u_0)^T \Phi (u-u_0) + c_s p_s^2\\
		& \text{s.t.}
		& & L_{\tilde{f}} (e^T \mathcal{P} e) + L_{\tilde{g}} (e^T \mathcal{P} e) \, u \leq - e^T \mathcal{W} e + p_s\\
		& 
		& & A_{\rm neq}(q,\dot{q})\,u \leq b_{\rm neq}(q,\dot{q}) \\
		& 
		& & \bar{R}_1 u_1^2 + \dot{q}_1 u_1 \leq P_{\max}/2 \\
		& 
		& & \bar{R}_2 u_2^2 + \dot{q}_2 u_2 \leq P_{\max}/2.
		\end{aligned}
		\end{equation*}		
		The CLF and other parameters of this controller are the same as those used by controller $C1$.
		\item The third strategy is the standard feedback-linearization controller described in this section. 
	\end{enumerate}
	Using initial conditions $x(0) = \begin{bmatrix}-\pi/2&0&0&0\end{bmatrix}$, simulations are conducted to evaluate these three controllers in a joint-space regulation task toward a target position $q^{\star}~=~\begin{bmatrix}\pi/2&0\end{bmatrix}$. This corresponds to the output function $y(x)~=~q - q^{\star}$ which further defines the control procedure explained in this section. Simulation results illustrated in Figures \ref{fig:2link_example_PosTorque} and \ref{fig:2link_example_PowerCost2go} show the followings : Controller $C_1$ demonstrates the best performance. This can be seen from the position trajectories of joints as well as the CLF. In particular, joint 1 reaches the target angle in the shortest time while providing the minimum amount of deviation for joint 2 from its desired position. This is achieved by channelling greater amount of power to joint 1 than other controllers without exceeding torque limits. On the other hand, static allocation strategies are comparable to each other. This manifests itself as CLF trends with similar time constants. However, the controller $C2$ produces a less deviation in joint 2 whereas the controller $C3$ provides a shorter settling time for joint 1. While doing so, the controller $C2$ requires less control effort because of the fact that it provides the same negative CLF rate with the the minimum torque.
	
	\begin{figure}[!h]
		\centering
		\includegraphics[width=0.7\columnwidth]{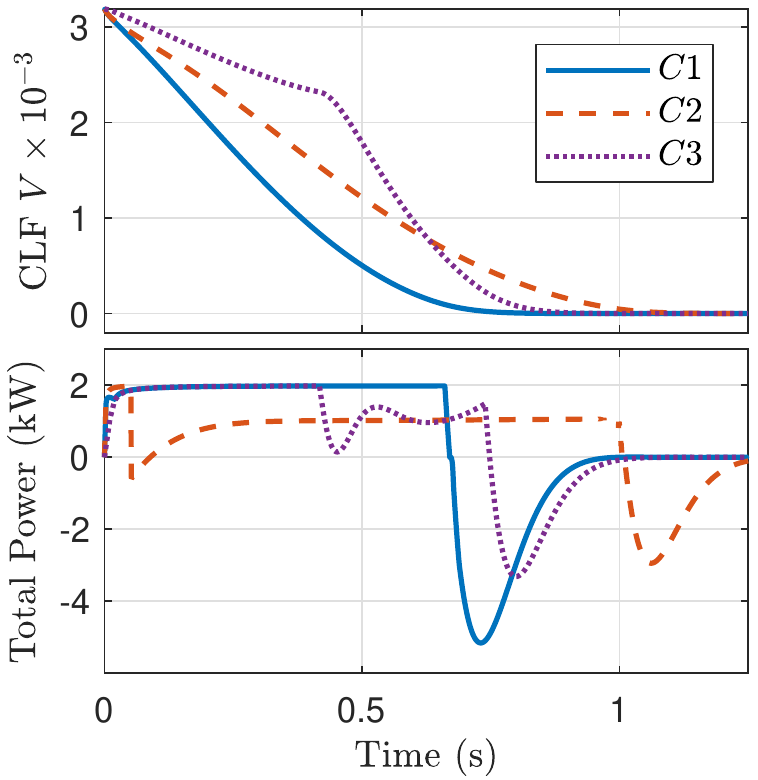}
		\caption{Value of CLF $V$ (top) scaled by a factor of $10^{-3}$ and total power consumption $\sum_{i=1}^{2}P_i$ of joints (bottom), both as a function of time, for controllers $C1$ (solid-blue), $C2$ (dashed-red), and $C3$ (dotted-purple).}
		\label{fig:2link_example_PowerCost2go}
	\end{figure}
	
	\begin{figure}[!h]
		\centering
		\includegraphics[width=0.8\columnwidth]{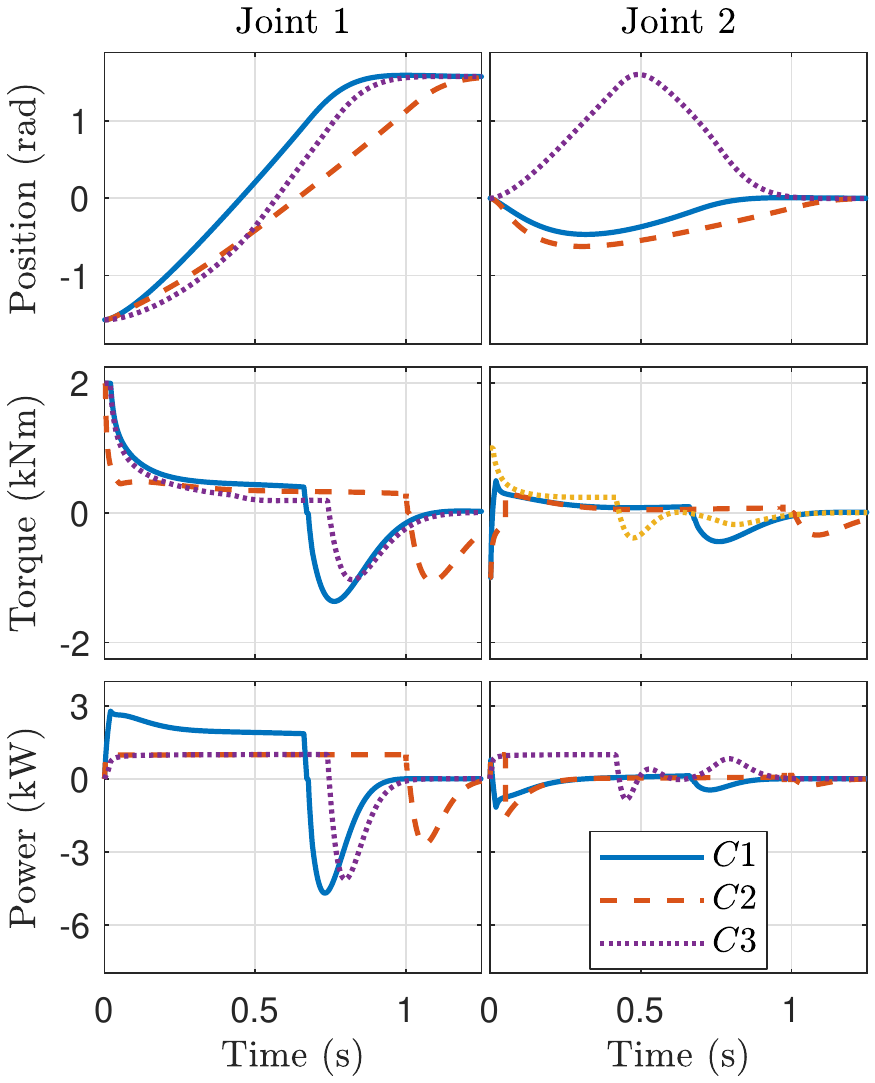}
		\caption{Position (top), torque (middle), and power (bottom) trajectories of joint 1(left) and joint 2 (right) for controllers $C1$ (solid-blue), $C2$ (dashed-red), and $C3$ (dotted-purple).}
		\label{fig:2link_example_PosTorque}
	\end{figure}
\end{exmp}

\section{Preliminary Experimental Results}
\label{sec:experiments}
In this section, we present results of preliminary experiments conducted on a single DoF electromechanical actuator that we built to comparatively evaluate two controllers with exact and approximate models of power supply limit, respectively, as detailed in Sec.~\ref{sec:dynamiccontrol}. The experimental setup illustrated in Fig.~\ref{fig:experimentalsetup} includes the actuator, a power-supply, and a test rig, which is a self-contained system capable of applying controlled external loads to and measuring output position and power consumption of the actuator. The actuator consists of a drivetrain, a PMSM with an encoder, and control/driver electronics based on an FPGA with a generic six bridge power amplifier. The electronics run a field oriented controller for torque control and a PID-type trajectory tracking controller with CI that is designed according to the procedure in Sec.~\ref{sec:dynamiccontrol}. Parameters of the actuator and other details of the experimental setup are given in Table~\ref{tab:experimentalsetup}.

\begin{table}[!b]
	\centering
	\caption{Parameters of the actuator, controllers, and the power supply in the experimental setup.}
	\label{tab:experimentalsetup}
	\setlength\extrarowheight{-5pt}
	
	\begin{tabular}{r|l}
		{\bf Parameter} & {\bf Value} \\ \hline
		Motor & Maxon EC-4pole 30 \\ 
		Phase-to-Phase Resistance & 0.1 $\Omega$ \\
		Motor Efficiency & 95 \% \\ 
		Voltage & 24 V \\
		Inertia & 1 $\text{kg}\,\text{m}^2$ \\ 
		Viscous damping & 0.05 $\text{N}\,\text{m}/(\text{rad/s})$ \\
		Torque control sampling rate & 20 kHz \\
		Torque control bandwidth & 500 Hz \\
		Maximum speed $\dot{q}_{\max}$ & 4 rad/s \\
		Peak driver current $I_{\max}$ & 32 A \\
		Torque constant $k_t$ & 6 $\text{N}\,\text{m}/\text{A}$ \\
		Power supply limit $P_{\max}$ & 400 W \\
		Position control sampling rate & 2 kHz \\
	\end{tabular}
\end{table}	

Two controllers are implemented on the actuator for experimental evaluation. While both controllers are of the PID-type, they use different models of the power supply limit nonlinearity, as mentioned above. The first algorithm, which is considered as a baseline controller and abbreviated as $C1$ from now on, employs the exact model (\ref{eq:psat_simple}), which ignores the electrical losses of the motor since its efficiency is close to unity, in combination with the torque saturation nonlinearity corresponding to the peak current of the driver to find a time-varying instantaneous torque limit as
\begin{equation*}
u_{\max}(t) = \min(I_{max} k_t, P_{\max}/\dot{q}(t)).
\end{equation*}
On the other hand, the second controller which will be referred as $C2$ from now on, uses the approximate model (\ref{eq:psat_approx}), which defines a standard torque saturation ${\rm sat}(u, u_{\max})$ with
\begin{equation*}
u_{\max} = \min(I_{max} k_t, P_{\max}/\dot{q}_{\max}) = P_{\max}/\dot{q}_{\max} = 100 \,\text{N}\,\text{m.}
\end{equation*}
{
	\vspace{-1cm}
\begin{figure}[!t]
	\centering
	\includegraphics[width=0.8\columnwidth]{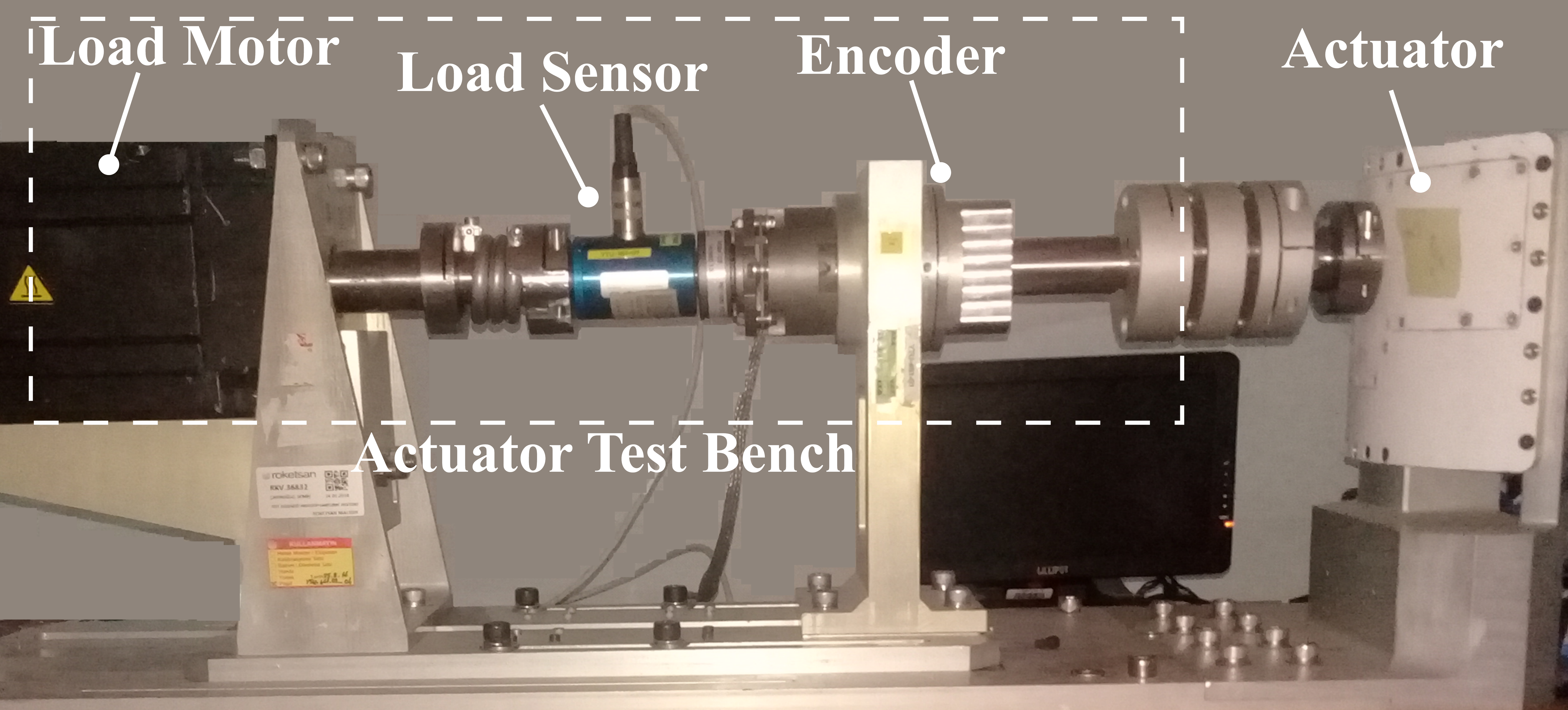}
	\caption{Experimental setup.}
	\label{fig:experimentalsetup}
\end{figure}

Two types of experiments are conducted to validate controllers and to compare their resulting closed-loop control performance :

1) Time-domain experiments : Step-input in position-command with different amplitudes are applied to controllers whose time-domain responses are illustrated in Fig.~\ref{fig:timedomain_experiment} from which a comparison can be made in terms of their maximum overshoots and settling times. In particular, as is evident from the results reported in Table~\ref{tab:timedomain_experiment}, the baseline controller $C1$, which uses the exact model of power limitation, outperforms the controller with the approximate model clearly by providing nearly half settling-time for the command with the highest amplitude without any significant increase in the overshoot whereas responses of controllers become less different as amplitude decreases. This is due to the fact that larger torques are needed to achieve the desired response to step commands with greater amplitudes, which controller $C2$ cannot deliver because of the conservative limitations on admissible torques defined by the approximate power limit model.}
\begin{figure}[!h]
	\centering
	\includegraphics[width=0.8\columnwidth]{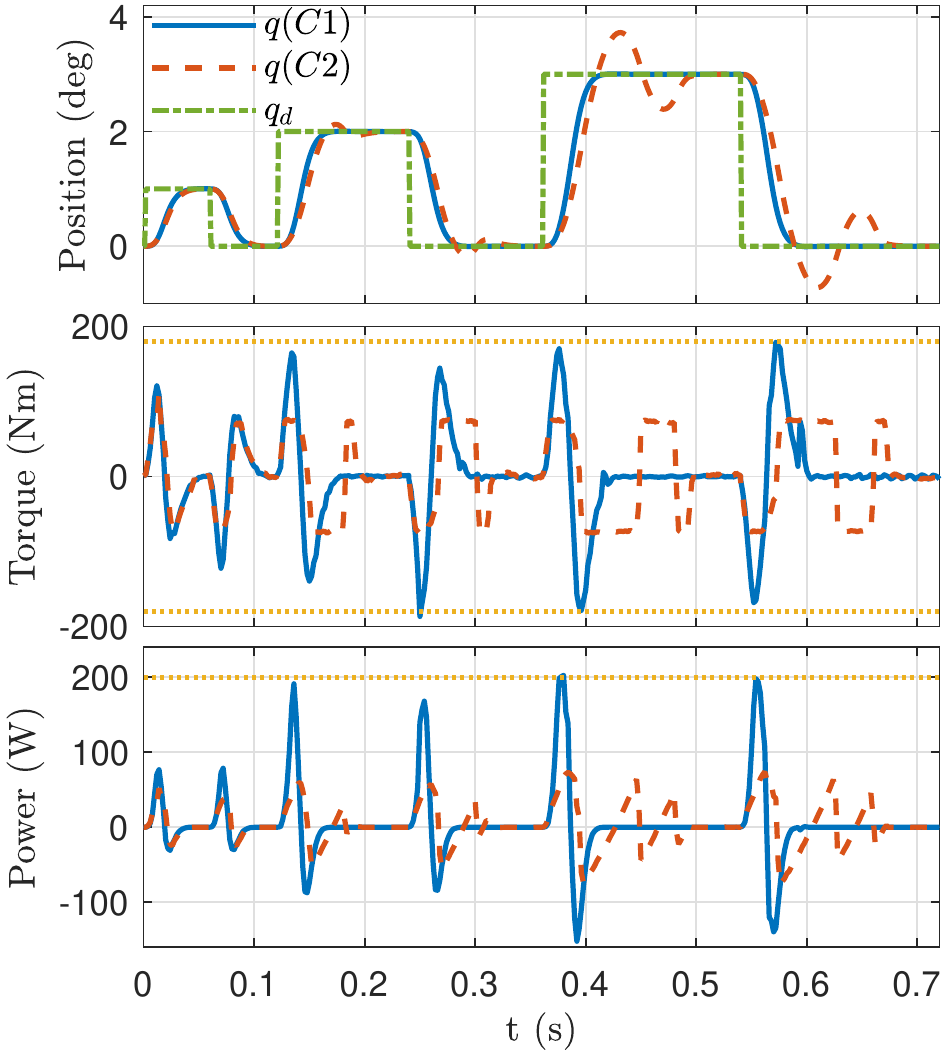}
	\caption{Power consumptions (bottom), motor torques (middle) and position responses (top) of controllers $C1$ (solid blue line) and $C2$ (dashed red line) to $1^\circ$,$2^\circ$, and $3^\circ$ step commands (dash dotted green line in the top figure). Torque limits due to the driver's peak current rating and power supply limit are marked with the dotted orange lines.}
	\label{fig:timedomain_experiment}
\end{figure}

\begin{table}[!h]
	\centering
	\caption{Settling time and Percent Overshoot (PO) values of controllers for different step commands.}
	\label{tab:timedomain_experiment}
	\setlength\extrarowheight{-5pt}
	\begin{tabular}{r|c|c|c|}
		{\bf Controller} & {\bf Amplitude} & {\bf Settling Time} & {\bf PO} \\ \hline\hline
		\multirow{ 3}{*}{$C1$} & $1^\circ$ & 0.036 & 0 \\
		& $2^\circ$ & 0.039 & 0 \\
		& $3^\circ$ & 0.043 & 0 \\ \hline
		\multirow{ 3}{*}{$C2$} & $1^\circ$ & 0.038 & 0 \\
		& $2^\circ$ & 0.051 & 6.0 \\
		& $3^\circ$ & 0.071 & 24.30 \\ 	\hline	
	\end{tabular}
\end{table}
	
2) Frequency-domain experiments:  We compare the closed-loop frequency response of controllers by stimulating the system with a chirp command which has a unit amplitude and frequency content in the range between $1 Hz$ and $24 Hz$ and by measuring the resulting frequency response functions, which are illustrated in Figure~\ref{fig:frequencydomain_experiment_bode}. The Bode plots show that controller $C1$ yields a better closed-loop tracking performance by providing higher magnitude and less phase delay in the high frequency region (especially above 10 Hz). Furthermore, the power consumption of both controllers given in Figure~\ref{fig:frequencydomain_experiment_power} shows that the controller $C1$ channels as much power as permissible to the motor without exceeding the power supply limit by reacting quickly to the changes in the measurements. This result validates not only the effectiveness of the proposed control approach but also its safety.
	\vspace{-0.5cm}
\begin{figure}[!h]
	\centering
	\includegraphics[width=0.77\columnwidth]{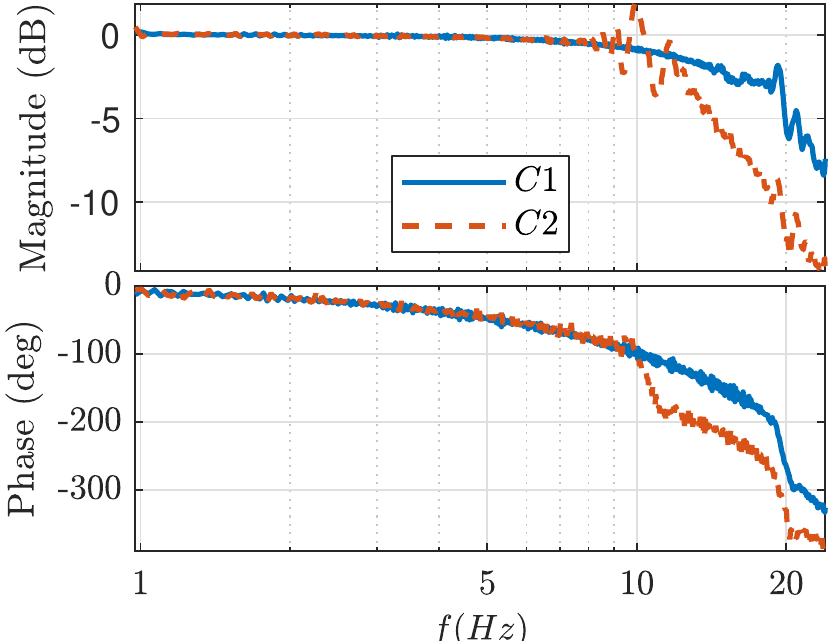}
	\caption{Magnitude (top) and phase (bottom) response of the actuator with controllers $C1$ (solid blue) and $C2$ (dashed orange).}
	\label{fig:frequencydomain_experiment_bode}
\end{figure}

\begin{figure}[!h]
	\centering
	\includegraphics[width=0.85\columnwidth]{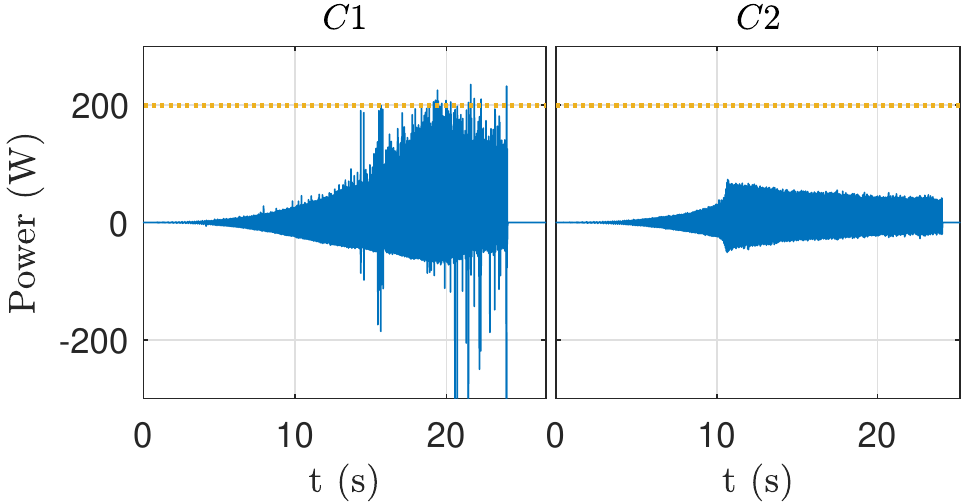}
	\caption{Power consumption of controllers $C1$ (left) and $C2$ (right) under power supply limit (dotted yellow) during the chirp test.}
	\label{fig:frequencydomain_experiment_power}
\end{figure}

In conclusion, both time-domain and frequency-domain experiments on a simple 1-DoF actuator provide preliminary evidence verifying theoretical advantages of the main idea in this paper, which is to use exact model of power supply limit in controllers instead of the approximate model.

\section{Conclusions and Future Works}
\label{sec:conclusion} 
In this work, motion control systems
with strict power limits are considered. Effects of these physical limits
as being a nonlinearity on dynamics are analysed rigorously in both frequency and time domains. In the former analysis, our exploration of the frequency response behavior of power supply limits follows from derivations of describing function and maximum bandwidth frequency. Time domain investigation is done with reference to controller design. In particular, we consider dynamic classical controllers for linear and nonlinear Euler-Lagrange systems. Finally, two general control architectures are proposed to improve control performance by leveraging the power supply limit through dynamic power allocation for multi DoF systems. In particular, these architectures are developed in the framework of constrained optimal control and constrained inverse optimal control where QP problems are formulated for linear systems and nonlinear systems, respectively. Finally, preliminary results of an experiment are presented and briefly discussed to support the theoretical contributions in the paper.

Since the presented material provides a new perspective on an old problem which has been becoming more important with recent developments in mobile systems, there are various future work directions : Analysis on stability and performance limitations due to power limit can be geared towards underactuated nonlinear systems for which swing-up energy control is a commonly used strategy \cite{spong.IEEECSM1995,astrom_furuta.IFAC1996}. The connection between passivity-based control and power supply limits might be interesting to explore further since passivity-based control is related to regulation of system energy. Convexification of the unified controller which solves both dynamic power allocation and control problems for linear mechanical systems by a finite-horizon QP problem in conjuction with power supply limit constraints is also an interesting avenue to explore since it would expand the application areas of those controllers by reducing computational requirements. Finally, further experimental results are needed to verify both classical controllers considering power supply limits and the unified algorithm embedding dynamic power allocation and feedback control.

\bibliography{references}

\end{document}